\newtheorem{thm}{Theorem} 
\newtheorem{lem}[thm]{Lemma}
\newdefinition{rmk}{Remark} 
\newproof{pf}{Proof} 
\newproof{pot}{ProofofTheorem\ref{thm2}}
\newtheorem{corollary}{Corollary}
\def\ps@pprintTitle{%
 \let\@oddhead\@empty
 \let\@evenhead\@empty
 \def\@oddfoot{\centerline{\thepage}}%
 \let\@evenfoot\@oddfoot}
\begin{document}

\begin{frontmatter}

\title{Parameterized Linear Time Transitive Closure}

\author{Giorgos Kritikakis\corref{cor1}} \fnref{aff, fn1} \ead{georgecretek@gmail.com}
\author{ Ioannis G Tollis \fnref{aff} }  \ead{tollis@csd.uoc.gr}

\cortext[cor1]{Corresponding author}
\fntext[fn1]{Giorgos Kritikakis is currently employed by Tom Sawyer Software.}
\affiliation[aff]{organization={University of Crete},
            addressline={Voutes Campus}, 
            city={Heraklion},
            postcode={70013}, 
            country={Greece}}

\begin{abstract}

Inquiries such as whether a task A depends on a task B, whether an author A has been influenced by a paper B, whether a certain protein is associated with a specific biological process or molecular function, or whether class A inherits from class B, are just a few examples of inquiries that can be modeled as reachability queries on a network (Directed Graph). Digital systems answer myriad such inquiries every day.

In this paper, we discuss the transitive closure problem. We focus on applicable solutions that enable us to answer
queries fast, in constant time, and can serve in real-world applications. In contrast to the majority of research on 
this topic that revolves around the construction of a two-dimensional adjacency matrix, we present an approach 
that builds a reachability indexing scheme. 
This scheme enables us to answer queries in constant time and can be built in parameterized linear time. In addition, it captures a compressed data structure. Our approach and algorithms are validated by extensive
experiments that shed light on the factors that play a key role in this problem.
To stress the efficiency of this solution and demonstrate the potential to apply our approach to important problems, we use it to speed up Fulkerson's method for finding the width of a DAG. Our results challenge the prevailing belief, reiterated over the last thirty years, regarding the efficiency of this method.

Our approach is based on the concept of chain decomposition. Before we delve into its description, we introduce,
analyze, and utilize a chain decomposition algorithm. Furthermore, we explore how chain decomposition can facilitate transitive closure solutions introducing a general-purpose linear time reduction technique that removes a large subset of transitive edges given any chain decomposition.
\end{abstract}

\begin{keyword}
graph algorithms\sep hierarchy\sep directed acyclic graphs (DAG)\sep path/chain decomposition\sep transitive closure\sep transitive reduction\sep reachability\sep reachability indexing scheme
\end{keyword}

\end{frontmatter}


\section{Introduction}
\label{sec:Intro}
Computing the transitive closure or reachability information of a directed graph is fundamental in computer science and is the basic step in many applications.
Formally, given a directed graph $G = (V, E)$, the transitive closure of $G$, denoted as $G$*, is a graph ($V, E$*) such that $E$* contains all edges in $E$, and for any pair of vertices $u$, $v \in V$, if there exists a directed path from $u$ to $v$ in $G$, then there is a directed edge from $u$ to $v$ in $E$*.
 An edge $(v_1,v_2)$ of a DAG $G$ is transitive if there is a path longer than one edge that connects $v_1$ and $v_2$.
 Given a directed graph with cycles, we can find the strongly connected components (SCC) in linear time and collapse all vertices of a SCC into a supernode. Hence, any reachability query can be reduced to a query in the resulting \emph{Directed Acyclic Graph} (DAG). Additionally, DAGs are very important in many applications in several areas of research and business because they often represent hierarchical relationships between objects in a structure.  Any DAG can be decomposed into vertex disjoint \emph{paths} or \emph{chains}. In a path, every vertex is connected to its
 successor by an edge, while in a chain any vertex is connected to its successor by a directed path, which may be an edge.  A \emph{path/chain decomposition} is a set of vertex disjoint paths/chains that cover all the vertices of a DAG (see Figure~\ref{fig:Decomposition}). Two chains (or paths) $c_1$ and $c_2$ can be merged into a new chain if the last vertex of $c_1$ can reach the first vertex of $c_2$, or if the last vertex of $c_2$ can reach the first vertex of $c_1$. The process of merging two or more paths or chains into a new chain is called path or chain \emph{concatenation}.

The \emph{width} of a DAG $G=(V, E)$ is the maximum number of mutually unreachable vertices of $G$~\cite{DILWORTH}.
An \emph{optimum chain decomposition} of a DAG $G$ contains the minimum number of chains, $k$, which is equal to the width of $G$.  
Due to the multitude of applications, there are several algorithms to find a chain decomposition of a DAG, see for example~\cite{Jagadish:1990:CTM:99935.99944, Fulkerson, chendag, makinen2019sparse, caceres2021linear,caceres2022sparsifying, kogan2022beating, van2021minimum}. Some of them find the optimum and some are heuristics.  Generally speaking, the algorithms that compute the optimum take more than linear time and use flow techniques which are often heavy and complicated to implement.  On the other hand, for several practical applications, it is not necessary to compute an optimum chain decomposition. The computation of an efficient chain decomposition of a DAG has many applications in several areas including bioinformatics \cite{bonizzoni2007linear,gramm2007haplotyping}, evolutionary computation \cite{jaskowski2011formal}, databases \cite{Jagadish:1990:CTM:99935.99944}, graph drawing \cite{ortali2018algorithms,NewFrHierDrawings,lionakis2020algorithms}, distributed systems \cite{ikiz2006efficient,tomlinson1997monitoring}
.

In Section~\ref{sec:PathDec}, we review some basic path and chain decomposition approaches, 
and we introduce a new efficient way to create a chain decomposition. Our approach creates a number of chains that is very close to the optimum. We analyse and utilize this algorithm to accelerate transitive closure solutions.
Furthermore, we present experimental results that show that the width of a DAG grows as the graph becomes denser.
In fact, we compute the width for four different random DAG models in order to obtain insights about its growth as the size of the graphs grows. 
An interesting observation is that the width of DAGs created using the Erdős-Rényi (ER) random graph model is proportional to $\frac{\mbox{number of nodes}}{\mbox{average degree}}$. 
Next, in Section \ref{sec:HierTrans}, we show that $|E_{red}| \leq width*|V|$, and describe how to remove a significant subset of transitive edges, $E_{tr}'$, in linear time, in order to bound $|E-E_{tr}'|$ by $O(k*V)$ given any path/chain decomposition of size $k$. Clearly, this can boost many known transitive closure solutions. Section \ref{sec:IndexingScheme} demonstrates how to build an indexing scheme that implicitly contains the transitive closure and we report experimental results that shed light on the interplay of width, $E_{red}$, and $E_{tr}$.

We consider reachability mainly for the static case, i.e., when the graph does not change.
The question of whether an arbitrary vertex
$v$ can reach another arbitrary vertex $u$ can be answered in linear time by running a breadth-first or depth-first search from $v$, or it can be answered in constant time after a reachability indexing scheme or transitive closure of the graph has been computed. 
The transitive closure of a graph with $|V|$ vertices and $|E|$ edges can be computed in $O(|V|*|E|)$ time by starting a breadth-first or depth-first search from each vertex. Alternatively, one can use the Floyd-Warshall algorithm~\cite{floyd1962algorithm} which runs in $O(|V|^3)$, or solutions based on matrix multiplication~\cite{strassen1969gaussian}.
Currently, the best-known bound on the asymptotic complexity of a matrix multiplication algorithm
$O(|V|^{2.3728596})$ time~\cite{alman2021refined}.  
An algorithm with complexity $O(|V|^{2.37188})$ was very recently announced in a preprint~\cite{duan2023faster}.  
However, these and other similar improvements to Strassen's Algorithm are not used in practice because they are complicated and the constant coefficient hidden by the notation is extremely large. 
Here we focus on computing a reachability indexing scheme in parameterized linear time.
Notice that we do not explicitly compute the transitive closure matrix of a DAG.  The matrix can be easily computed from the reachability indexing scheme in $O(|V|^2)$ time (i.e., constant time per entry).  

\begin{figure}[hbt!]
     \centering
     \begin{subfigure}[b]{0.19\textwidth}
         \centering
         \includegraphics[width=\textwidth]{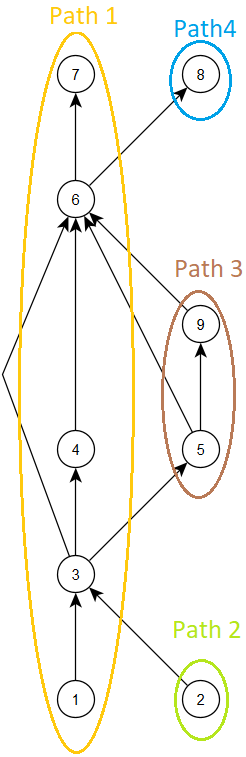}
         \caption{A path decomposition of a graph consisting of 4 paths.}
     \end{subfigure}
   \hspace{2cm}
     \begin{subfigure}[b]{0.19\textwidth}
         \centering
         \includegraphics[scale=0.55]{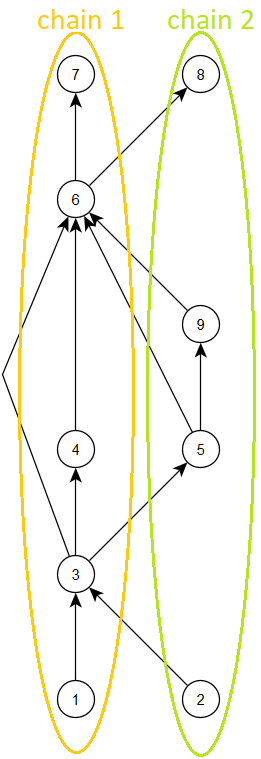}
         \caption{A chain decomposition of a graph consisting of 2 chains.}
     \end{subfigure}
             \caption{Path and chain decomposition of an example graph.}
        \label{fig:Decomposition}
\end{figure}

In this paper we present a practical algorithm to compute a reachability indexing scheme (or the transitive closure information) of a DAG $G=(V, E)$, utilizing a given path/chain decomposition. The scheme can be computed in parameterized linear time, where the parameter is the number of paths/chains, $k_c$, in the given decomposition.  The scheme can answer any reachability query in constant time.  Let $E_{tr}$, $E_{tr}\subset E$, denote the set of transitive edges and $E_{red}$, $E_{red}=E-E_{tr}$, denote the set of non-transitive edges of $G$. We show that $|E_{red}|\leq width*|V|$ and that we can compute a substantially large subset of $E_{tr}$ in linear time (see Section~ \ref{sec:HierTrans}). This implies that any DAG can be reduced to a smaller DAG that has the same transitive closure in linear time.  Consequently, several hybrid reachability algorithms will run much faster in practice.
The time complexity to produce the reachability indexing scheme is $O(|E_{tr}|+k_c*|E_{red}|) = O(|E_{tr}|+k_c*width*|V|) = O(|E_{tr}|+k_c^2*|V|)$, and its space complexity is $O(k_c*|V|)$ (see Section~\ref{sec:IndexingScheme}).  
Our experimental results reveal the practical efficiency of this approach.
In fact, the results show that our method is substantially better in practice than the theoretical bounds imply, indicating that path/chain decomposition algorithms can be used to solve the transitive closure (TC) problem. Clearly, given a reachability indexing scheme the TC matrix can be computed in $O(|V|^2)$ time.

\section{Width of a DAG and Decomposition into Paths/Chains}  \label{sec:PathDec}

In this section, we briefly describe some categories of path and chain decomposition techniques, we introduce a new chain decomposition heuristic and show experimental results. The experiments explore the practical efficiency of our chain decomposition approach and 
the behavior of the width in different graph models.
We focus on fast and practical path/chain decomposition heuristics. There are two categories of path decomposition algorithms, Node Order Heuristic, and Chain Order Heuristic, see ~\cite{Jagadish:1990:CTM:99935.99944}.
The first constructs the paths one by one, while the second creates the paths in parallel. The chain-order heuristic starts from a vertex and extends the path to the extent possible. The path ends when no more unused immediate successors can be found. The node-order heuristic examines each vertex (node) and assigns it to an existing path. If no such path exists, then a new path is created for the vertex. We assume that the vertices are topologically ordered since tracing the vertices in topological order is highly beneficial for creating compact decompositions.

In addition to path-decomposition algorithm categorization, Jagadish in~\cite{Jagadish:1990:CTM:99935.99944} describes some chain decomposition heuristics. Those heuristics run in $O(|V|^2)$ time using a pre-computed transitive closure, which is not linear, and we will not discuss them further. In this work, we do the opposite: we first compute an efficient chain decomposition very fast and use it to construct transitive closure solutions.

More precisely our algorithm decomposes a DAG into $k_c$ chains in $O({\mid}E{\mid}+c*l)$ time, where $c$ is the number of path concatenations, and $l$ is the length of a longest path of the DAG (without requiring any precomputation of the transitive closure). One may argue that the worst-case time complexity of this technique is 
$O(|V|^2)$, since $c<|V|$ and $l<|V|$. However, we show theoretically and experimentally that the factor $c*l$ is rarely larger than the number of edges $({\mid}E{\mid})$. Hence, it does not only have the tightest theoretical bound but also behaves purely like a linear time algorithm.

Our approach is a variation of the Node Order heuristic combined with a novel chain concatenation technique.

\subsection{Path/Chain Concatenation}
We present now a key component of our approach, a path concatenation technique that takes as input any path decomposition of a DAG $G=(V,E)$ and constructs a chain decomposition by performing repeated path concatenations. If it performs $c$ path concatenations and $l$ is the length of a longest path of the graph, then it requires $O({\mid}E{\mid}+c*l)$ time. In fact, the actual cost for every successful concatenation is the path between the two concatenated paths.  Hence the total time for all concatenations is bounded by $(c*l)$. 
In order to apply our path concatenation algorithm, we first compute a path decomposition of the graph. 
We can use any linear-time path decomposition Algorithm. After performing the concatenation technique, no further concatenations remain.

As we have already discussed, two chains (or paths) $c_1$ and $c_2$ can be merged into a new chain if the last vertex of $c_1$ can reach the first vertex of $c_2$, or if the last vertex of $c_2$ can reach the first vertex of $c_1$. The process of merging two or more chains into a new chain is called a path or chain concatenation. To reduce the number of chains of any given chain decomposition, we search for possible concatenations and merge the chains we find. Searching for a concatenation implies that we are searching for a path between two chains. 
We can start searching from the first vertex of a chain looking for the last vertex of another chain, or from the last vertex of a chain looking for the first vertex of another chain.

Given a DAG $G=(V, E)$ and a path decomposition $D_p$ that contains $k_p$ paths we will build a chain decomposition of $G$ that contains $k_c$ chains in $O({\mid}E{\mid}+(k_p-k_c)*l)$ time, where $l$ is the length of a longest path of $G$. This is accomplished by performing path/chain concatenations. Since each path/chain concatenation reduces the number of chains by one, the total number of such concatenations is $(k_p-k_c)$. Since paths are by definition chains of a special structure, and we start by concatenating paths into chains, we use the more general term chain in our algorithms.

For every path in $D_p$ we start a reversed DFS lookup function from the first vertex of a chain, looking for the last vertex of another chain traversing the edges backward. The reversed DFS lookup function is the well-known depth-first search graph traversal for path finding. If the reversed DFS lookup function detects the last vertex of a chain, then it concatenates the chains. If we simply perform the above, as described, then the algorithm will run in $O(k_p*{\mid}E{\mid})$ since we will run $k_p$ reversed DFS searches. However, every reversed DFS search can benefit from the previous reversed DFS results. A reversed DFS for path finding returns the path between the source vertex and the target vertex, which in our case, is the path between the first vertex of a chain and the last vertex of another chain. Hence, every execution that goes through a set of vertices $V_i$ it splits them into two vertex disjoint sets, $R_i$ and $P_i$.  $P_i$ contains the vertices of the path from the source vertex to the destination vertex and $R_i$ contains every vertex in $V_i-P_i$. If no path is found then $V_i=R_i$ and $P_i=\emptyset$.

Notice that every vertex in set $R_i$ is not the last vertex of a chain. If it were then it would belong to $P_i$ and not to $R_i$. Similarly, for every vertex in $R_i$, all its predecessors are in $R_i$ too. Hence, if a forthcoming reversed DFS lookup function meets a vertex of $R_i$, there is no reason to proceed with its predecessors.

\begin{algorithm}[!ht]
\caption{Concatenation}\label{alg:conc}
\begin{algorithmic}[1]
\Procedure{Concatenation}{$G,D$}\newline
 \textbf{INPUT:} A DAG $G=(V,E)$, and a path decomposition D of G \newline
 \textbf{OUTPUT:} A chain decomposition of G
\For{\textbf{each path:} $p_i \in D$}
\State $f_i \gets \mbox{ first vertex of } p_i$
\State $(R_i,P_i) \gets \mbox{reversed\_DFS\_lookup}(G,f_i)$
\If{$P_i\neq \emptyset$}

\State $l_i \gets \mbox{ destination vertex of } P_i$
\Comment{Last vertex of a path}
\State \textbf{Concatenate\_Paths(} $l_i$, $f_i$ \textbf{)} 
\EndIf
\State $G \gets G \setminus R_i$
\EndFor
\EndProcedure
\end{algorithmic}
\end{algorithm}

Algorithm \ref{alg:conc} shows our path/chain concatenation technique. Observe that the reversed DFS lookup function is invoked for every starting vertex of a path.
Every reversed DFS lookup function goes through the set $R_i$ and the set $P_i$, examining the nodes and their incident edges. $P_i$ is the path from the first vertex of a chain to the last vertex of another chain. The set $R_i$ contains all of the vertices the function went through except the vertices of $P_i$.  Hence we have the following theorem:

\begin{thm}
The time complexity of Algorithm \ref{alg:conc} is $O({\mid}E{\mid}+(k_p-k_c)*l)$, where $l$ is the length of a longest path in $G$.
\end{thm}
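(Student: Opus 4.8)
The plan is to use an aggregate (amortized) analysis that partitions the total running time according to the two kinds of vertex sets produced by the reversed DFS calls: the ``rejected'' sets $R_i$ and the connecting paths $P_i$. I would first show that the work charged to the $R_i$ sets, summed over all iterations, is $O(|E|)$, then show that the work charged to the connecting paths is $O((k_p-k_c)\cdot l)$, and finally add the two bounds. Throughout, I charge each examination of an in-edge performed by a reversed DFS to the vertex currently being processed, so that the cost of the call in iteration $i$ is $\sum_{v\in V_i}\mathrm{indeg}(v)$ up to a constant factor, where $V_i=R_i\cup P_i$.

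For the $R$-part the two structural facts already isolated in the text do the work. The sets $R_i$ are pairwise disjoint, because the algorithm executes $G\gets G\setminus R_i$ immediately after iteration $i$, so a vertex placed in some $R_i$ is deleted and can never be visited again; and every predecessor of a vertex of $R_i$ also lies in $R_i$, so the backward exploration of the rejected region is ``closed'' and is fully charged to vertices that disappear. I would make this precise by noting that each vertex belongs to at most one $R_i$ and that, when it is processed there, its incident in-edges are scanned only at that moment (they vanish together with the vertex). Hence the scans attributable to $\bigcup_i R_i$ total $O(|V|+|E|)=O(|E|)$, and this already absorbs all the failing calls, for which $P_i=\emptyset$ and $V_i=R_i$.

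For the $P$-part I would first observe that each successful reversed DFS performs exactly one concatenation and that every concatenation reduces the number of chains by one; since the algorithm begins with the $k_p$ input paths and terminates with $k_c$ chains, the number of successful calls is exactly $k_p-k_c$. Each such call contributes, beyond the $R$-work already paid for, only the cost of reporting the connecting path $P_i$, which is a simple path of $G$ and hence has at most $l$ edges, where $l$ is the length of a longest path. Summing over the $k_p-k_c$ successful calls gives total $P$-work $O((k_p-k_c)\cdot l)$, and combining with the previous paragraph yields $O(|E|+(k_p-k_c)\cdot l)$.

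The step I expect to be the main obstacle is the clean separation of the two budgets at the interface between $P_i$ and $R_i$ inside a single call. One must verify that every in-edge scan is charged either to a vertex of $R_i$ (which is then removed, so its contribution is counted once and stays within the $O(|E|)$ budget) or to one of the at most $l$ edges of $P_i$, even though the $P_i$ vertices, unlike the $R_i$ vertices, survive and may be revisited by later calls. I would handle this by having the reversed DFS halt as soon as the target (the last vertex of another chain) is reached, so that at each vertex of the returned path only the edges leading into already-rejected subtrees (chargeable to $R_i$) and the single successful edge (chargeable to $P_i$) are scanned. The most delicate point is to confirm that chords among the vertices of $P_i$, and repeated visits to surviving path vertices in later iterations, cannot inflate the count beyond the stated bound; this is where the argument has to be carried out most carefully, and I would verify it by arguing that any such extra scan either falls on an edge incident to a to-be-removed $R$-vertex or is bounded by the $O(l)$ length of the reported path.
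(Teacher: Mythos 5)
Your proposal follows essentially the same aggregate argument as the paper's own proof: the pairwise disjointness of the deleted sets $R_i$ (enforced by $G \gets G \setminus R_i$) bounds that portion of the work by $O(|E|)$, while the $c = k_p - k_c$ successful concatenations each contribute a connecting path of at most $l$ edges, giving $O(|E| + (k_p-k_c)\cdot l)$ in total. You are in fact more careful than the paper, which never mentions the charging subtlety at the $P_i$/$R_i$ interface that you flag; that point is closed by observing that when the reversed DFS scans an in-edge of a vertex currently on the stack, the other endpoint cannot itself still be on the stack (this would close a directed cycle in the DAG), so every scanned non-tree edge is incident to a vertex that finishes, lands in $R_i$, and is deleted together with that edge, keeping it inside the $O(|E|)$ budget.
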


\begin{proof}
Assume that we have $k_p$ paths. We call the reversed\_DFS\_lookup function $k_p$ times. Hence, we have $(R_i, P_i)$ sets, $0\leq i <k_p$.
In every loop, we delete the vertices of $R_i$. Hence, $R_i\cap R_j=\emptyset$, $0\leq i,j<k_p$ and $i\neq j$. 
We conclude that $\bigcup\limits_{i=0}^{k_p-1} R_i \subseteq V$ and $\sum_{i=0}^{k_p-1}{\mid}R_i{\mid}\leq{\mid}V{\mid}$.

A path/chain $P_i$, $0\leq i < k_p$, is not empty if and only if a concatenation has occurred. Hence, $\sum_{i=0}^{k_p-1}{\mid}P_i{\mid}\leq c*l$ where $c$ is the number of concatenations and $l$ is the longest path of the graph. 
Since every concatenation reduces the number of paths/chains by one, we have that $c = k_p-k_c$.
\end{proof}

Please notice that according to the previous proof the actual time complexity of Algorithm \ref{alg:conc} is $O({\mid}E{\mid}+ \sum_{i=0}^{k_p-1}{\mid}P_i{\mid})$ which in practice is expected to be significantly better than $O({\mid}E{\mid}+c*l)$. 
Indeed, this is confirmed by our experimental results. We ran experiments on Erdős-Rényi (ER) model graphs of size 10k, 20k, 40k, 80k, and 160k vertices and average degree 10; the run times were 9, 34, 99, 228, and 538 milliseconds, respectively, which shows, as expected, that the execution time follows a linear function.

\subsection{A Better Chain Decomposition Heuristic} \label{section H3}

Algorithm \ref{alg:conc} describes how to produce a chain decomposition of a DAG $G$ by applying a path/chain concatenation technique and works for any given path/chain decomposition of $G$. As described, the chain concatenation is a post-processing step applied to any given decomposition. 

Algorithm \ref{alg:NH_conc}, is a Node Order variation that deploys our novel concatenation technique and some greedy choices. Crucial is that it does not separate the steps of path decomposition and chain concatenation. As you see, the block of the if-statement of line 10 is run online instead of running Algorithm~\ref{alg:conc} as a post-processing step. In other words, if we do not find an immediate predecessor, we search all predecessors using the reversed\_DFS\_lookup function. 
The reversed\_DFS\_lookup function is applied in real-time if the algorithm does not find an immediate predecessor that is the last vertex of a chain.

{Furthermore, Algorithm \ref{alg:NH_conc} adds two extra greedy choices: (a) when we visit a vertex of out-degree 1, we immediately add its unique immediate successor to its path/chain, and (b) instead of searching for the first available immediate predecessor (that is the last vertex of a path), we choose an available vertex with the lowest out-degree. These steps have a secondary effect on the decomposition outcome reducing its size by a few chains. Still, we keep them since they do not affect the theoretical or practical efficiency.

\begin{algorithm}[hbt!]
\caption{Chain Decomposition (NH conc.)}\label{alg:NH_conc}
\begin{algorithmic}[1]
\Procedure{New Heuristic with concatenation}{$G,T$}\newline
 \textbf{INPUT:} A DAG $G=(V,E)$, and a topological sorting $T(v_1,...,v_i,...,v_N)$ of G \newline
 \textbf{OUTPUT:} A chain decomposition of G
 \State $K \gets \emptyset$ \Comment{Set of chains}
\For{ \textbf{every vertex }$v_i \in T$ \textbf{in ascending topological order}}
\State $\mbox{Chain }C$
\If{$v_i \mbox{ is assigned to a chain}$}
\State $C \gets v_i\mbox{'s chain}$  \Comment{C is a pointer to a chain}
\ElsIf{$v_i\mbox{ is not assigned to a chain}$}
\State $l_i\gets\mbox{choose the immediate predecessor with the lowest outdegree }$
\State $\mbox{\hspace*{0.8cm}that is the last vertex of a chain}$
\If{$l_i = \mbox{null}$}
\State $(R_i,P_i) \gets \mbox{reverse\_DFS\_lookup(}G,v_i\mbox{)}$
\If{$P_i\neq \emptyset$}
\State $l_i \gets \mbox{ destination vertex of } P_i$
\EndIf
\State $G \gets G \setminus R_i$
\EndIf
\If{$l_i\neq \mbox{null}$}
\State $C \gets \mbox{chain indicated by }l_i$
\State \textbf{add} $v_i$ \textbf{to} $C$
\Else
\State  $C \gets \mbox{new Chain}()$
\State \textbf{add} $v_i$ \textbf{to} $C$ %
\EndIf
\State \textbf{add} $C$ \textbf{to} $K$
\EndIf
\If{there is an immediate successor $s_i$ of $v_i$ with in-degree 1}
\State \textbf{add} $s_i$ \textbf{ to } $C$ 
\EndIf

\EndFor
\State \textbf{return} $K$
\EndProcedure
\end{algorithmic}
\end{algorithm}

\subsection{DAG Decomposition: Experimental Results}
\label{sec:ExperimentsChains}

In this section we present extensive experimental results on graphs, most of which were created by NetworkX~\cite{networkx}. We use three different random graph generator models: Erdős-Rényi~\cite{erdHos1959renyi}, Barabasi-Albert~\cite{barabasi1999emergence}, and Watts-Strogatz~\cite{watts1998collective} models.
The generated graphs are made acyclic, by orienting all edges from low to high ID number, see the documentation of networkx~\cite{networkx} for more information about the generators.  Additionally, we use the Path‑Based DAG Model that was introduced in~\cite{lionakis2021constant} and is especially designed for DAGs with a predefined number of randomly created paths. For every model, we created 12 types of graphs: Six types of 5000 nodes and six types of 10000 nodes, both with average degree 5, 10, 20, 40, 80, and 160. We used different average degrees in order to have results for various sizes and density. 

We ran the heuristics on multiple copies of graphs and examined the performance of the heuristics in terms of the number of chains in the produced chain decompositions. 
All experiments were conducted on a simple laptop PC (Intel(R) Core(TM) i5-6200U CPU, with 8 GB of main memory). Our algorithms have been developed as stand-alone java programs.
We observed that the graphs generated by the same generator with the same parameters have small width deviation. For example, the percentage of deviation on ER and Path-Based model is about 5\% and for the BA model is less than 10\%. The width deviation of the graphs in the WS model is a bit higher, but this is expected since the width of these graphs is significantly smaller.
\newline
\newline
\textbf{Random Graph Generators:}
\begin{itemize}
  \item \textbf{Erdős-Rényi (ER) model} \cite{erdHos1959renyi}: The generator returns a random graph $G_{n,p}$, where $n$ is the number of nodes and every edge is formed with probability $p$.
  \item \textbf{Barabási–Albert (BA) model} \cite{barabasi1999emergence}: preferential attachment model: A graph of $n$ nodes is grown by attaching new nodes each with $m$ edges that are preferentially attached to existing nodes with high degree. The factors $n$ and $m$ are parameters to the generator.
  \item \textbf{Watts–Strogatz (WS) model} \cite{watts1998collective}: small-world graphs: First it creates a ring over $n$ nodes. Then each node in the ring is joined to its $k$ nearest neighbors. Then shortcuts are created by replacing some edges as follows: for each edge $(u,v)$ in the underlying “$n$-ring with $k$ nearest neighbors” with probability $b$ replace it with a new edge $(u,w)$ with uniformly random choice of an existing node $w$. The factors $n,k,b$ are the parameters of the generator.
  \item \textbf{Path‑Based DAG (PB) model} \cite{lionakis2021constant}:  In this model, graphs are randomly generated based on a number of predefined but randomly created paths.
\end{itemize}

We compute the minimum set of chains using the method of Fulkerson~\cite{Fulkerson}, which in brief consists of the following steps:  1) Construct the transitive closure $G^*(V,E')$ of $G$, where $V=\{v_1,...,v_n\}$. 2) Construct a bipartite graph $B$ with partitions $(V_1, V_2)$, where $V1=\{x_1, x_2,..., x_n\},$ $ V2=\{y_1, y_2, ..., y_n\} $. An edge $(x_i,y_j)$ is formed whenever $(v_i,v_j)\in E'$.  3) Find a maximal matching $M$ of $B$. The width of the graph is $n-{\mid}M{\mid}$. In order to construct the minimum set of chains, for any two edges $e_1,e_2\in M$, if $e_1=(x_i,y_t)$ and $e_2=(x_t,y_j)$ then connect $e_1$ to $e_2$.

The aim of our experiments is twofold: (a) to understand the behavior of the width of DAGs created in different models, and (b) to compare the behavior of our heuristic used on graphs of these models.
Table~\ref{table:ChainsWidth} shows the width and the number of chains created by our algorithm for all graphs of 5000 and 10000 nodes.
Observe that our chain decomposition heuristic, Algorithm~\ref{alg:NH_conc}, computes a chain decomposition that is very close to the optimum (width).
\begin{itemize}
  \item \textbf{NH conc.}: Chain decomposition using Algorithm \ref{alg:NH_conc}
  \item \textbf{Width}: The width of the graph (computed by Fulkerson's method). 
\end{itemize}

\vspace{10pt}
\textbf{Understanding the width in DAGS:}
In order to understand the behavior of the width on DAGs of these different models we observe: (i)~the BA model produces graphs with a larger width than ER, and (ii)~the ER model creates graphs with a larger width than WS.  For the WS model, we created two sets of graphs: The first has probability $b$ equals 0.9 and the second 0.3. Clearly, if the probability $b$ of rewiring an edge is 0, the width would be one, since the generator initially creates a path that goes through all vertices. As the rewiring probability $b$ grows, the width grows. This is the reason we choose a low and a high rewiring probability. Figures \ref{fig:width5000} and \ref{fig:width10000} demonstrate the behavior of the width for each model on the graphs of 5000 and 10000 nodes. 
Please notice that in almost all model graphs (except WS 0.3) the width of a DAG decreases fast as the density of the DAG increases.  Additionally,  it is interesting to observe here that the width of the ER model graphs is proportional to $\frac{\mbox{number of nodes}}{\mbox{average degree}}$. 

Finally, we provide a comparison of the width of the ER and the Path-Based model graphs, shown in Figure~\ref{fig:Chart_ErdosPBComp10000} for graphs of 10000 nodes and varying average degrees. We show this separately since the important details would not be visible in Figure~\ref{fig:width10000}. It is interesting to note that for sparser PB graphs the width is very close to the number of predefined (but randomly created paths) whereas the width of the ER graphs is very high.  However, as the graphs become denser the width for both models seems to eventually converge. 
\\ \\

\textbf{Number of chains versus the width in DAGS:}
In order to compare the number of chains produced by our approach on the graphs of these models, besides the aforementioned tables, we also created several figures.  Namely, we visualize how close is the number of chains produced by our heuristics to the width. In particular Figures \ref{fig:Chart_WidthBarabasiComp10000}, \ref{fig:Chart_WidthErdosComp10000}, \ref{fig:Chart_WidthWatts09Comp10000}, and \ref{fig:Chart_WidthPBMComp10000}, show how close is the number of chains produced by our technique (i.e., the blue line) to the width (i.e., red line) for ER, BA, WS, and PB models. 

Our heuristics run very fast, as expected, in just a few milliseconds.
Thus it is not interesting to elaborate further on their running time in this subsection. 
In the following sections, we present partial run-time results that are obtained for computing an indexing scheme (see Tables~\ref{table:IndexingSchemeResults5000}, and~\ref{table:IndexingSchemeResults10000}).  These results indicate that the running time of the heuristics is indeed very low.

\begin{table}
    \centering
    \small\addtolength{\tabcolsep}{-1pt}
    \begin{tabular}{|c||c|c|c|c|c|c||c|c|c|c|c|c|} 
        \hline
         & \multicolumn{6}{c||}{$|V|=5000$} & \multicolumn{6}{c|}{$|V|=10000$} \\
        \hline
        Av. Degree & 5 & 10 & 20 & 40 & 80 & 160 & 5 & 10 & 20 & 40 & 80 & 160 \\
        \hline
        \hline
        \multicolumn{13}{|c|}{BA} \\
        \hline
        NH conc & 1630 & 1055 & 664 & 355 & 207 & 163 & 3341 & 2159 & 1264 & 752 & 400 & 228 \\
        Width & 1593 & 1018 & 623 & 320 & 187 & 163 & 3282 & 2066 & 1172 & 678 & 351 & 198 \\
        \hline
        \multicolumn{13}{|c|}{ER} \\
        \hline
        NH conc & 923 & 492 & 252 & 139 & 70 & 38 & 1837 & 1003 & 516 & 271 & 139 & 72 \\
        Width & 785 & 403 & 217 & 110 & 56 & 33 & 1561 & 802 & 409 & 219 & 110 & 58 \\
        \hline
        \multicolumn{13}{|c|}{WS, b=0.9} \\
        \hline
        NH conc & 687 & 212 & 60 & 25 & 20 & 17 & 1332 & 447 & 100 & 29 & 24 & 22 \\
        Width & 560 & 187 & 54 & 22 & 17 & 15 & 1101 & 378 & 93 & 27 & 20 & 18 \\
        \hline
        \multicolumn{13}{|c|}{WS, b=0.3} \\
        \hline
        NH conc & 9 & 4 & 4 & 5 & 4 & 5 & 12 & 4 & 4 & 4 & 4 & 4 \\
        Width & 9 & 4 & 4 & 4 & 4 & 4 & 12 & 4 & 4 & 4 & 4 & 4 \\
        \hline
        \multicolumn{13}{|c|}{PB, Paths=100} \\
        \hline
        NH conc & 86 & 101 & 107 & 93 & 73 & 51 & 125 & 141 & 153 & 142 & 120 & 89 \\
        Width & 70 & 70 & 70 & 68 & 58 & 30 & 100 & 100 & 100 & 99 & 90 & 47 \\
        \hline
    \end{tabular}
    \caption{The number of chains produced by Algorithm~\ref{alg:NH_conc}, compared to the optimum, for graphs with 5000 and 10000 nodes.}
    \label{table:ChainsWidth}
\end{table}

\begin{figure}
     \centering
     \begin{subfigure}[b]{0.8\textwidth}
         \centering
         \includegraphics[scale=0.8]{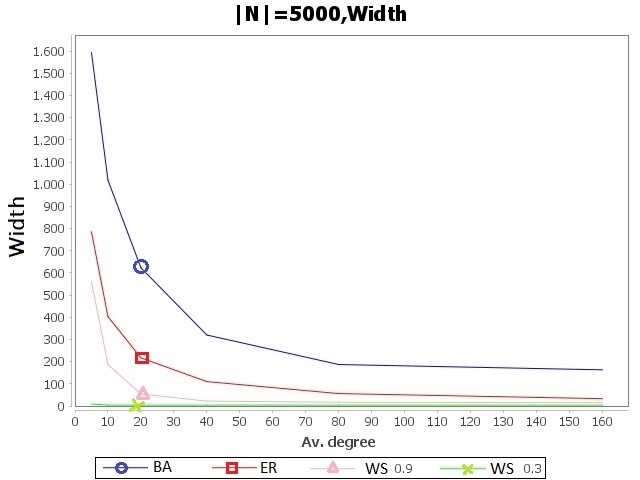}
         \caption{The width curve on graphs of 5000 nodes.}
         \label{fig:width5000}
     \end{subfigure}
     \hfill
     \begin{subfigure}[b]{0.8\textwidth}
         \centering
         \includegraphics[scale=0.8]{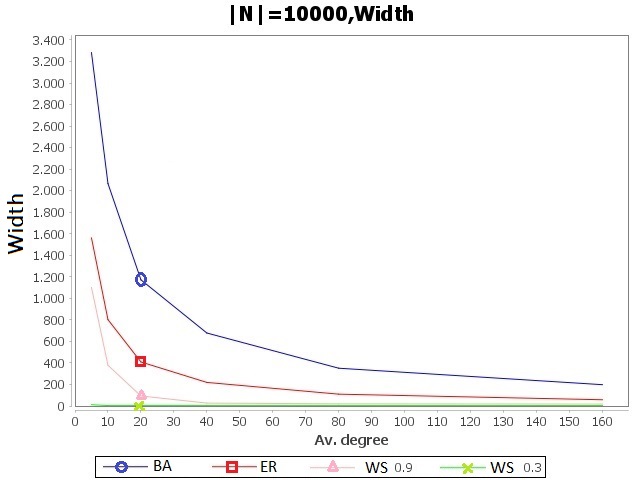}
         \caption{The width curve on graphs of 10000 nodes.}
         \label{fig:width10000}
     \end{subfigure}
    \caption{The width curve on graphs of 5000 and 10000 nodes using three different models.
}
        \label{fig:width_comp}
\end{figure}

\begin{figure}[!ht]
\centering
 \includegraphics[scale=0.8]{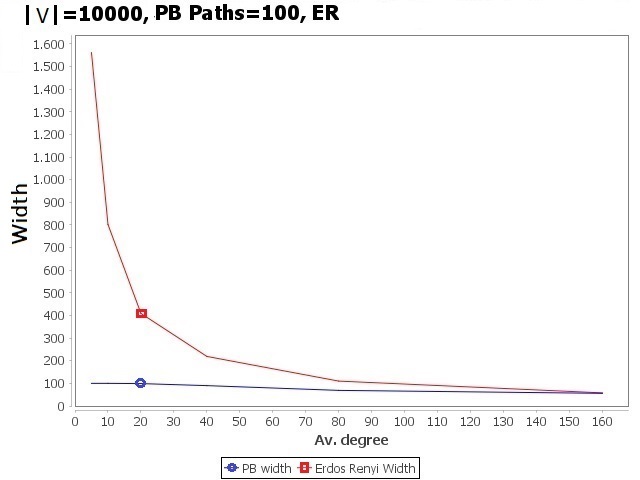}
\caption{A comparison of width between ER model and PB model.}
\label{fig:Chart_ErdosPBComp10000}
\end{figure}

\begin{figure}[!ht]
\centering
\includegraphics[scale=0.8]{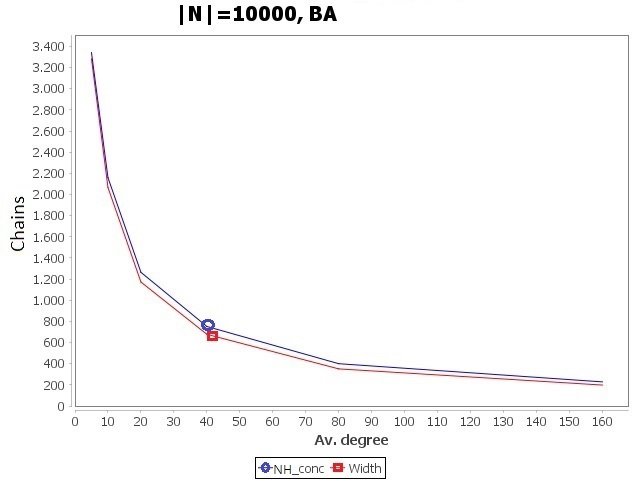}
\caption{The efficiency of the chain decomposition algorithm in the BA model.}
\label{fig:Chart_WidthBarabasiComp10000}
\end{figure}

\begin{figure}[!ht]
\centering
\includegraphics[scale=0.8]{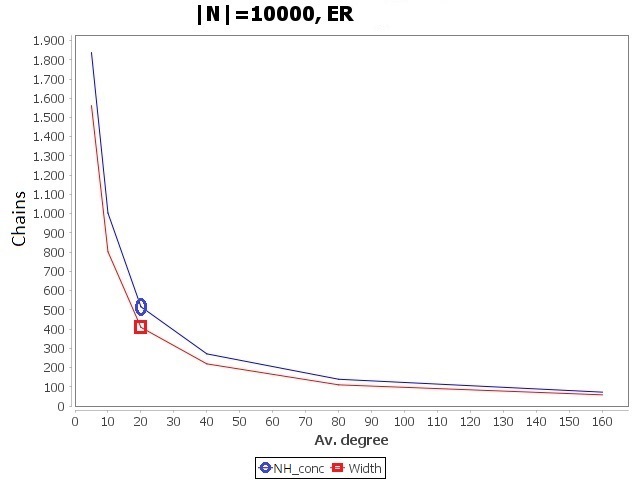}
\caption{The efficiency of our chain decomposition algorithm in the ER model.}
\label{fig:Chart_WidthErdosComp10000}
\end{figure}

\begin{figure}[!ht]
\centering
 \includegraphics[scale=0.8]{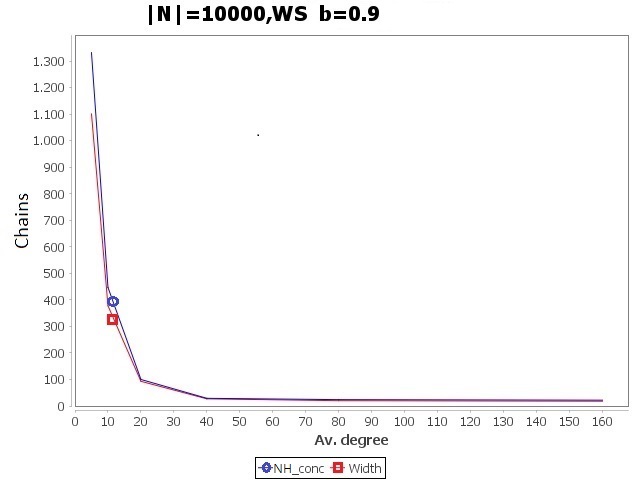}
\caption{The efficiency of our chain decomposition algorithm in WS model.}
\label{fig:Chart_WidthWatts09Comp10000}
\end{figure}

\begin{figure}[!ht]
\centering
 \includegraphics[scale=0.8]{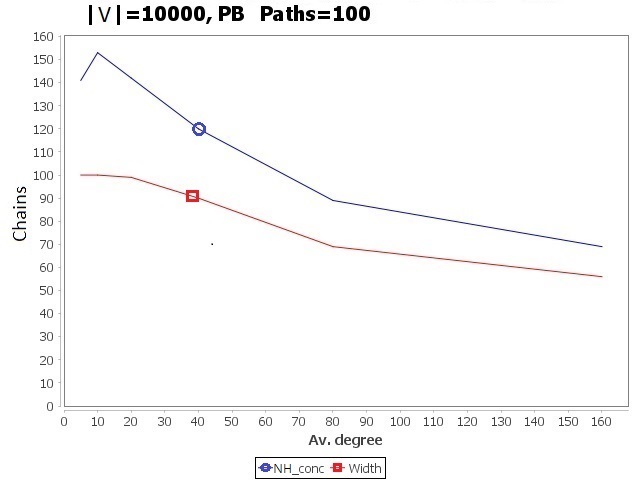}
\caption{The efficiency of our chain decomposition algorithm in the PB model.}
\label{fig:Chart_WidthPBMComp10000}
\end{figure}

\clearpage

\section{DAG Reduction for Faster Transitivity}\label{sec:HierTrans}
In this section, we show how to produce a linear time algorithm for the reduction of transitive edges, see Algorithm~\ref{alg:reduction}, which in combination with our fast chain decomposition, Algorithm~\ref{alg:NH_conc}, materialize the idea of a general-purpose reduction technique based on chain decomposition. This concept of reduction or abstraction of a DAG is useful in several applications beyond transitive closure. In~\cite{lionakis2019adventures} it was used to visualize (reduced) hierarchical graphs while displaying full reachability information. Therefore we state the following useful lemmas and Theorem~\ref{thm:EtrBound}:

\begin{lem}
\label{lemma:outgoing_nontransitive_edges}
Given a chain decomposition $D$ of a DAG $G=(V, E)$, each vertex $v_i \in V$, $0\leq i < |V|$, can have at most one outgoing non-transitive edge per chain.
\end{lem}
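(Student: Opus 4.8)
The plan is to argue by contradiction, exploiting the fact that a chain is totally ordered by reachability. Recall that in a chain every vertex reaches its successor by a directed path; since reachability is transitive, any vertex of the chain reaches every vertex that comes after it in the chain order. Consequently, for any two \emph{distinct} vertices $u, w$ lying in the same chain $c$, exactly one of them reaches the other through a directed path whose vertices all belong to $c$, and that path has length at least one. This total-order property is the structural engine of the argument, so I would state and justify it first.

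With this in hand, suppose toward a contradiction that some vertex $v_i$ has two distinct outgoing non-transitive edges $(v_i, u)$ and $(v_i, w)$ whose targets $u \neq w$ both lie in the same chain $c$. By the total-order property, assume without loss of generality that $u$ precedes $w$ in $c$, so there is a directed path $u \to \cdots \to w$ inside $c$ of length at least one. Concatenating the edge $(v_i, u)$ with this path yields a directed walk $v_i \to u \to \cdots \to w$ from $v_i$ to $w$. I would then observe that this walk uses at least two edges (the edge $(v_i,u)$ plus at least one edge on the $u$-to-$w$ path, these being distinct since $v_i$, $u$, $w$ are pairwise distinct in a loopless DAG), hence it is a path of length strictly greater than one. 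By the definition of a transitive edge stated in the introduction, the existence of such a path makes $(v_i, w)$ a transitive edge, contradicting the assumption that $(v_i, w)$ was non-transitive.

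The conclusion is then immediate: no vertex can carry two non-transitive outgoing edges into a single chain, which is exactly the claim. I expect the only delicate point — the ``main obstacle,'' though it is a mild one — to be the careful verification that the concatenated path genuinely has length at least two rather than collapsing to a single edge; this is secured precisely because $u$ and $w$ are distinct vertices of the chain, forcing at least one intermediate edge on the $u$-to-$w$ segment in addition to the edge leaving $v_i$. Everything else reduces to the transitivity of reachability within a chain, so no case analysis beyond the single ``which of $u,w$ comes first'' choice is required.
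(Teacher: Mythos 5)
Your proof is correct and follows essentially the same route as the paper's: order the two targets along their common chain, note the resulting path from the earlier target to the later one, and conclude that the edge to the later target is transitive. One minor inaccuracy worth fixing: in a \emph{chain} (unlike a path) consecutive vertices are joined by directed paths that may pass through vertices outside the chain, so your claim that the $u$-to-$w$ path lies entirely ``inside $c$'' is not justified --- but this is harmless, since your argument only needs the existence of a $u$-to-$w$ path of length at least one.
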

\begin{pf}
Given a graph $G(V,E)$, a decomposition $D(C_1, C_2, ..., C_{k_c})$ of G, and a vertex $v \in V$, assume vertex v has two outgoing edges, $(v,t_1)$ and $(v,t_2)$, and both $t_1$ and $t_2$ are in chain $C_i$. The vertices are in ascending topological order in the chain by definition. Assume $t_1$ has a lower topological rank than $t_2$. Thus, there is a path from $t_1$ to $t_2$, and accordingly a path from $v$ to $t_2$ through $t_1$. Hence, the edge $(v,t_2)$ is transitive. See Figure \ref{fig:transEdges_a}.  
\end{pf}

\begin{lem}
\label{lemma:incomming_nontransitive_edges}
Given any chain decomposition $D$ of a DAG $G=(V, E)$, each vertex $v_i \in V$, $0\leq i < |V|$, can have at most one incoming non-transitive edge per chain.
\end{lem}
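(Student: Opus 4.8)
The plan is to mirror the structure of the preceding lemma's proof by symmetry, exchanging the roles of source and target and reversing the direction of the topological argument. The statement to prove is that, given any chain decomposition $D(C_1,\ldots,C_{k_c})$ of a DAG $G=(V,E)$, each vertex $v$ can have at most one incoming non-transitive edge whose source lies in a given chain $C_i$.

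First I would argue by contradiction, assuming that $v$ has two incoming non-transitive edges $(s_1,v)$ and $(s_2,v)$ with both source vertices $s_1,s_2$ lying in the same chain $C_i$. Since the vertices of a chain are in ascending topological order by definition, I would assume without loss of generality that $s_1$ has a lower topological rank than $s_2$. Because $s_1$ and $s_2$ belong to the same chain $C_i$, there is a directed path from $s_1$ to $s_2$ (this is exactly the defining property of a chain: every vertex reaches its successor along the chain by a directed path). Then composing the path $s_1 \rightsquigarrow s_2$ with the edge $(s_2,v)$ yields a directed path from $s_1$ to $v$ that is strictly longer than one edge. By the definition of a transitive edge given in the introduction, this makes the edge $(s_1,v)$ transitive, contradicting the assumption that it is non-transitive. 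Hence at most one of the two incoming edges from $C_i$ can be non-transitive.

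This argument is the exact dual of the proof of Lemma~\ref{lemma:outgoing_nontransitive_edges}: there the common chain contained the two \emph{targets} $t_1,t_2$ and transitivity of $(v,t_2)$ followed from the path $t_1 \rightsquigarrow t_2$; here the common chain contains the two \emph{sources} $s_1,s_2$ and transitivity of $(s_1,v)$ follows from the path $s_1 \rightsquigarrow s_2$. I would likely include a parenthetical pointer to a companion figure analogous to Figure~\ref{fig:transEdges_a} to make the symmetry visually explicit.

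There is no real obstacle here; the only point demanding a moment of care is which of the two edges turns out to be the transitive one. Because the reachable source $s_2$ (the higher-ranked one) is the endpoint of the connecting chain-path, it is the edge from the \emph{lower}-ranked source $s_1$ that gets ``bypassed'' and becomes transitive. Stating the rank comparison and tracking which edge is rendered transitive carefully is the whole content of the proof; everything else follows immediately from the definitions of a chain and of a transitive edge.
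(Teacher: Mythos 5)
Your proof is correct and is exactly the argument the paper intends: the paper's own proof of Lemma~\ref{lemma:incomming_nontransitive_edges} simply says it is ``similar to the proof of Lemma~\ref{lemma:outgoing_nontransitive_edges}'' (pointing to Figure~\ref{fig:transEdges_b}), and what you have written is that symmetric argument spelled out in full, with the correct identification that the edge from the lower-ranked source $s_1$ is the one rendered transitive.
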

\begin{pf}
Similar to the proof of Lemma \ref{lemma:outgoing_nontransitive_edges}, see Figure \ref{fig:transEdges_b}.  
\end{pf}

\begin{thm}
\label{thm:EtrBound}
Let $G=(V, E)$ be a DAG with width $w$. The non-transitive edges of $G$ are less than or equal to $width*|V|$, in other words $|E_{red}|=|E|-|E_{tr}|\leq width*|V|$. 
\end{thm}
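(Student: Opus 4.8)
The plan is to combine Lemma~\ref{lemma:outgoing_nontransitive_edges} with an \emph{optimum} chain decomposition. First I would invoke Dilworth's theorem, as recalled in Section~\ref{sec:PathDec}: a DAG of width $w$ admits a chain decomposition $D$ consisting of exactly $w = width$ vertex-disjoint chains. This is the only ingredient I need beyond the two lemmas already established, and committing to such an optimum $D$ is precisely what turns the generic per-decomposition count into the factor $width$ (rather than the size $k_c$ of whatever decomposition happens to be at hand).

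Next I would count the non-transitive edges by charging each one to its tail. The set $E_{red}$ depends only on $G$, not on the chosen decomposition, so I am free to count it using the fixed optimum $D$. By Lemma~\ref{lemma:outgoing_nontransitive_edges}, each vertex $v \in V$ has at most one outgoing non-transitive edge whose head lies in a given chain $C_j$. Since $D$ partitions $V$ into exactly $width$ chains, it follows that $v$ has at most $width$ outgoing non-transitive edges in total.

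Finally, since every edge of $E_{red}$ leaves exactly one vertex (its tail), summing the per-vertex bound over all vertices gives
\[
|E_{red}| \;=\; \sum_{v \in V} \bigl|\{\text{non-transitive edges leaving } v\}\bigr| \;\le\; \sum_{v \in V} width \;=\; width \cdot |V|,
\]
which is exactly the claim. I would remark that a symmetric argument using Lemma~\ref{lemma:incomming_nontransitive_edges}, charging each non-transitive edge to its head, yields the same bound. The proof is short, so there is no serious technical obstacle; the one point that genuinely requires care is the first step, because the lemmas are valid for an arbitrary chain decomposition and one must deliberately pass to an optimum one (of size $width$) to obtain the stated constant, rather than a looser bound in terms of an arbitrary decomposition's size.
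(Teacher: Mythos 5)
Your proof is correct and takes essentially the same approach as the paper: both pass to an optimum chain decomposition with exactly $width$ chains (via Dilworth's theorem, as recalled in Section~\ref{sec:PathDec}) and then apply Lemma~\ref{lemma:outgoing_nontransitive_edges} to charge at most $width$ outgoing non-transitive edges to each vertex, summing over $V$. If anything, your write-up is slightly tighter, since you correctly observe that Lemma~\ref{lemma:incomming_nontransitive_edges} is only a symmetric alternative and not needed, whereas the paper invokes both lemmas.
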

\begin{pf}
Given any DAG $G$ and its width $w$, there is a chain decomposition of $G$ with $w$ number of chains. By Lemma \ref{lemma:outgoing_nontransitive_edges}, every vertex of G could have only one outgoing, non-transitive edge per chain. The same holds for the incoming edges, according to Lemma \ref{lemma:incomming_nontransitive_edges}.  Thus the non-transitive edges of $G$ are bounded by $width *|V|$. 
\end{pf}

\begin{figure}[ht!]
     \centering
     \begin{subfigure}[b]{0.25\textwidth}
         \centering
         \includegraphics[width=\textwidth]{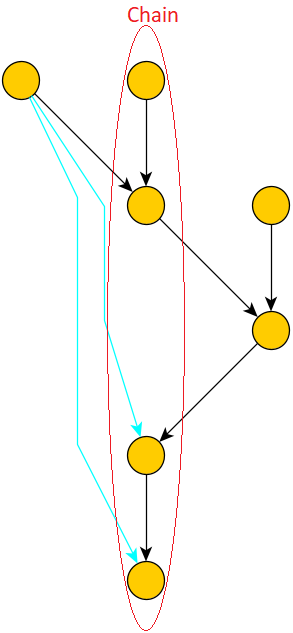}
         \caption{}
         \label{fig:transEdges_a}
     \end{subfigure}
     \hspace{2cm}
     \begin{subfigure}[b]{0.25\textwidth}
         \centering
         \includegraphics[width=\textwidth]{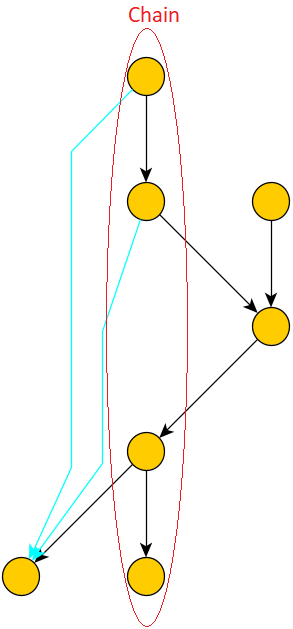}
         \caption{}
         \label{fig:transEdges_b}
     \end{subfigure}
        \caption{
        The light blue edges are transitive. (a) shows the outgoing transitive edges that end in the same chain. (b) shows the incoming transitive edges that start from the same chain.}
        \label{fig:transEdges}
\end{figure}

An interesting application of the above is that we can find a significantly large subset of $E_{tr}$ in linear time. Given any chain (or path) decomposition with $k_c$ chains, we can trace the vertices and their outgoing edges and keep the edges that point to the lowest point of each chain, rejecting the rest as transitive. We do the same for the incoming edges keeping the edges that come from the highest point (i.e., the vertex with the highest topological rank) of each chain.

Algorithm~\ref{alg:reduction}, is a linear time algorithm that reduces the outgoing transitive edges. The algorithm traverses all the vertices (outer loop) and utilizes a $k_c$-size array. For each vertex, it initializes the $k_c$-size array within the first inner loop. Then, it populates the array with the edges having the lowest rank for every chain during the second inner loop. Finally, it retrieves the reduced edge list from the $k_c$-size array in the third inner loop. This algorithm is simple and efficient but is not the only nor the most advanced way to reduce the graph in linear time using a chain decomposition. For example, there are several local heuristic choices that could be employed in order to find more transitive edges. To maintain our concise description approach to transitive closure, we refrain from elaborating on these details.

In this fashion we find a superset of $E_{red}$, call it $E_{red}'$, in linear time. Equivalently, we can find $E_{tr}'=E-E_{red}'$.  $E_{tr}'$ is a significantly large subset of $E_{tr}$ since $|E-E_{tr}'| = |E_{red}'|\leq k_c*|V|$.
Clearly, this approach can be used as a linear-time preprocessing step in order to substantially reduce the size of any DAG while keeping the same transitive closure as the original DAG $G$.  Consequently, this will speed up every transitive closure algorithm bounding the number of edges of any input graph, and the indegree and outdegree of every vertex by $k_c$.
For example, algorithms based on tree cover, see \cite{agrawal1989efficient,chen2005stack,trissl2007fast,wang2006dual},
are practical on sparse graphs and can be enhanced further with such a preprocessing step that removes transitive edges. Additionally, this approach may have practical applications in dynamic or hybrid transitive closure techniques: If one chooses to answer queries online by using graph traversal for every query, one could reduce the size of the graph with a fast (linear-time) preprocessing step that utilizes chains. Also, in the case of insertion/deletion of edges one could quickly decide if the edges to add or remove are transitive. Transitive edges do not affect the transitive closure, hence no updates are required. This could be practically useful in the dynamic insertion/deletion of edges.

\begin{algorithm}[!ht]
\caption{Reduction of outgoing edges.}
\label{alg:reduction}
\begin{algorithmic}[1]
\Procedure{Reduction}{$G,D$}\newline
\textbf{INPUT:} A DAG $G=(V,E)$, and the decomposition D of size k of G.
\State $incidentEdges[] \gets \mbox{new array of size k that holds edges.}$
\For{\textbf{each vertex:} $v \in G$}
 \For{\textbf{every outgoing edge $e(v,t)$ } }
  \State $chain \gets \mbox{The chain number of vertex } t.$
  \State $incidentEdges[chain] \gets \mbox{e}$
 \EndFor
 
 \For{\textbf{every outgoing edge $e(v,t)$ } }
  \State $chain \gets \mbox{The chain number of vertex } t.$
  \State $e'(v,t') \gets incidentEdges[chain]$
  \If{ $t'\mbox{ succeeds }t\mbox{ in the chain}$}    
   \State $incidentEdges[chain] \gets \mbox{e}$
  \EndIf
 \EndFor

 \State $reducedAdjList \gets \mbox{new empty list.}$
 \For{\textbf{every outgoing edge $e(v,t)$ } }
  \State $chain \gets \mbox{The chain number of vertex } t.$
  \State $e'(v,t') \gets incidentEdges[chain]$
  \If{$t' ==  t$}
   \State $reducedAdjList.\mbox{add}(t)$
  \EndIf
 \EndFor
\State $v.\mbox{adjTargetList} \gets reducedAdjList$
\EndFor
\EndProcedure
\end{algorithmic}
\end{algorithm}

\section{Reachability Indexing Scheme}  \label{sec:IndexingScheme}
In this section, we present an important application that uses a chain decomposition of a DAG. Namely, we solve the transitive closure problem by creating a reachability indexing scheme that is based on a chain decomposition and we evaluate it by running extensive experiments. Our experiments shed light on the interplay of various important factors as the density of the graphs increases.

Jagadish described a compressed transitive closure technique in 1990~\cite{Jagadish:1990:CTM:99935.99944} by applying an indexing scheme and simple path/chain decomposition techniques. His method uses successor lists and focuses on the compression of the transitive closure.
Simon~\cite{simon}, describes a technique similar to~\cite{Jagadish:1990:CTM:99935.99944}. His technique is based on computing a path decomposition, thus boosting the method presented in~\cite{goralcikova1979}. The linear time heuristic used by Simon is similar to the Chain Order Heuristic of~\cite{Jagadish:1990:CTM:99935.99944}.
A different approach is a graph structure referred to as path-tree cover introduced in~\cite{jin2008efficiently}, similarly, the authors utilize a path decomposition algorithm to build their labeling.

In the following subsections, we describe how to compute an indexing scheme in $O(|E_{tr}|+k_c*|E_{red}|)$ 
time, where $k_c$ is the number of chains (in any given chain decomposition) and $|E_{red}|$ is the number of non-transitive edges. Following the observations of Section~\ref{sec:HierTrans},
the time complexity of the scheme can be expressed as $O(|E_{tr}|+k_c*|E_{red}|) = O(|E_{tr}|+k_c*width*|V|)$ since $|E_{red}| \leq width*|V|$.  
Using an approach similar to Simon's~\cite{simon} our scheme creates arrays of indices to answer queries in constant time. The space complexity is $O(k_c*|V|)$.

 For our experiments, we utilize the chain decomposition approach of~\cite{kritikakis2022fast}, which produces smaller decompositions than previous heuristic techniques, without any considerable run-time overhead. Additionally, this heuristic, called NH\_conc, will perform better than any path decomposition algorithms as will be explained next. Thus the indexing scheme is more efficient both in terms of time and space requirements.
 Furthermore, the experimental work shows that, as expected, the chains rarely have the same length. Usually, a decomposition consists of a few long chains and several short chains. Hence, for most graphs, it is not even possible to have $|E_{red}|=width*|V|$, which assumes the worst case for the length of each chain. In fact, $|E_{red}|$ is usually much lower than that and the experimental results presented in  Tables \ref{table:IndexingSchemeResults5000} and \ref{table:IndexingSchemeResults10000} confirm this observation in practice.

Given a directed graph with cycles, we can find the strongly connected components (SCC) in linear time. Since any vertex is reachable from any other vertex in the same SCC (they form an equivalence class), all vertices in a SCC can be collapsed into a supernode. Hence, any reachability query can be reduced to a query in the resulting directed acyclic graph (DAG). This is a well-known step that has been widely used in many applications. Therefore, without loss of generality, we assume that the input graph to our method is a DAG. The following general steps describe how to compute the reachability indexing scheme:
\begin{enumerate}
  \item Compute a Chain decomposition
  \item Sort all Adjacency Lists
  \item Create an Indexing Scheme 
\end{enumerate}
In Step 1, we use our chain decomposition technique that runs in $O(|E|+c*l)$ time.
In Step 2, we sort all the adjacency lists in $O(|V|+|E|)$ time. Finally, we create an indexing scheme in
$O(|E_{tr}|+k_c*|E_{red}|)$ time and $O(k_c*|V|)$ space.  
Clearly, if the algorithm of Step 1 computes fewer chains then Step 3 becomes more efficient in terms of time and space.

\subsection{The Indexing Scheme}
Given any chain decomposition of a DAG $G$ with size $k_c$, an indexing scheme will be computed for every vertex that includes a pair of integers and an array of size $k_c$ of indexes. A small example is depicted in Figure \ref{IndexingSchemeExample}. The first integer of the pair indicates the node's chain and the second its position in the chain. For example, vertex 1 of Figure \ref{IndexingSchemeExample} has a pair $(1,1)$. This means that vertex 1 belongs to the $1st$ chain, and it is the $1st$ element in it.
Given a chain decomposition, we can easily construct the pairs in $O(|V|)$ time using a simple traversal of the chains.
Every entry of the $k_c$-size array represents a chain. The $i$-th cell represents the $i$-th chain. The entry in the $i$-th cell corresponds to the lowest point of the $i$-th chain that the vertex can reach. For example, the array of vertex 1 is $[1,2,2]$. The first cell of the array indicates that vertex 1 can reach the first vertex of the first chain (can reach itself, reflexive property). The second cell of the array indicates that vertex 1 can reach the second vertex of the second chain (There is a path from vertex 1 to vertex 7). Finally, the third cell of the array indicates that vertex 1 can reach the second vertex of the third chain.

Notice that we do not need the second integer of all pairs. If we know the chain a vertex belongs to, we can conclude its position using the array. We use this presentation to simplify the understanding of the users.

The process of answering a reachability query is simple. Assume, there is a source vertex $s$ and a target vertex $t$. To find if vertex $t$ is reachable from $s$, we first find the chain of $t$, and we use it as an index in the array of $s$. Hence, we know the lowest point of $t$'s chain vertex $s$ can reach. $s$ can reach $t$ if that point is less than or equal to $t$'s position, else it cannot.

\begin{figure}
\centering
\includegraphics[width=0.6\textwidth]{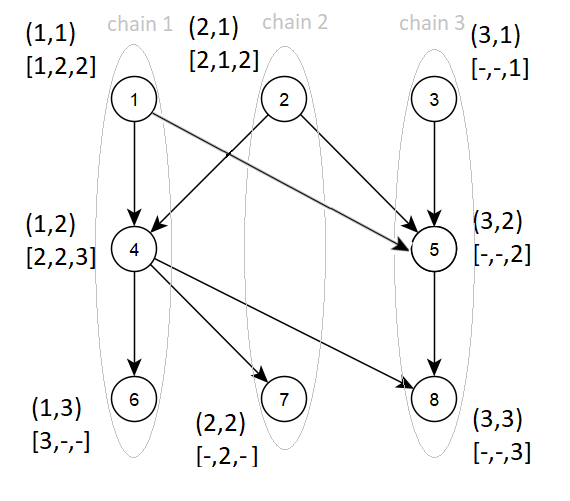}
\caption{An example of an indexing scheme.}
\label{IndexingSchemeExample}
\end{figure}

\subsection{Sorting the Adjacency lists}
Next, we use a linear time algorithm to sort all the adjacency lists of immediate successors in ascending topological order. See Algorithm~\ref{alg:SortingAdjList}. The algorithm maintains a stack for every vertex that indicates the sorted adjacency list. Then it traverses the vertices in reverse topological order, $(v_n,...,v_1)$. For every vertex $v_i$, $1\leq i\leq n$, it pushes $v_i$ into all immediate predecessors' stacks.
This step can be performed as a preprocessing step, even before receiving the chain decomposition. To emphasize its crucial role in the efficient creation of the indexing scheme, if the lists are not sorted then the second part of the time complexity would be $O(k_c*|E|)$ instead of $O(k_c*|E_{red}|)$.

\begin{algorithm}[!ht]
\caption{Sorting Adjacency lists}
\label{alg:SortingAdjList}
\begin{algorithmic}[1]
\Procedure{Sort}{$G,t$}\newline
 \textbf{INPUT:} A DAG $G=(V,E)$ 
\For{\textbf{each vertex:} $v_i \in G$}
\State $v_i\mbox{.stack  } \gets \mbox{new stack()}$
\EndFor
\For{\textbf{each vertex $v_i$ in reverse topological order} }
\For{\textbf{every incoming edge $e(s_j,v_i)$ } }
\State $s_j\mbox{.stack.push($v_i$)} $ 
\EndFor
\EndFor
\EndProcedure
\end{algorithmic}
\end{algorithm}

\begin{lem}
\label{lemma:SortingAdjList}
Algorithm \ref{alg:SortingAdjList} sorts the adjacency lists of immediate successors in ascending topological order, in linear time.
\end{lem}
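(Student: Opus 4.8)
The plan is to establish the two assertions of the lemma separately: first that, for every vertex, popping its stack yields the immediate successors in ascending topological order (correctness), and then that the whole procedure runs in linear time.

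For correctness I would fix an arbitrary vertex $s$ and track exactly which elements are pushed onto the stack of $s$ and in what order. A vertex $v_i$ is pushed onto the stack of $s$ precisely once, at the moment $v_i$ is visited by the outer loop and the edge $e(s,v_i)$ is encountered as an incoming edge of $v_i$; this occurs if and only if $(s,v_i)\in E$, i.e. if and only if $v_i$ is an immediate successor of $s$. Hence the set of elements ever placed on the stack of $s$ is exactly the immediate-successor set of $s$, each inserted exactly once. The order in which these pushes occur is the order in which the immediate successors of $s$ are visited by the outer loop, which processes vertices in \emph{reverse} topological order; therefore the successors are pushed from the one of highest topological rank down to the one of lowest rank.

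The crux is translating this push order into the pop order. Since a stack is LIFO, the element pushed last (the successor of lowest topological rank) sits on top and is popped first, while the element pushed first (highest rank) sits at the bottom and is popped last. Consequently, reading the stack from top to bottom yields the immediate successors of $s$ in ascending topological order, which is exactly the claimed sorted adjacency list. I expect this step --- pinning down that ``visited in reverse topological order'' combined with the LIFO discipline produces ascending order --- to be the only point that needs care; everything else is bookkeeping. A clean way to phrase it is: if $t$ and $t'$ are two successors of $s$ with $t$ of strictly lower topological rank than $t'$, then $t$ is processed after $t'$ by the outer loop, hence pushed after $t'$ and thus popped before $t'$; since this holds for every such pair, the pop sequence is sorted in ascending topological rank.

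For the time bound I would charge the work to vertices and edges. Initializing one empty stack per vertex costs $O(|V|)$. In the main loop each vertex is visited once, and for each visited vertex $v_i$ the inner loop performs one $O(1)$ push for every incoming edge $e(s_j,v_i)$; summing over all vertices, every edge of $G$ contributes exactly one push, for a total of $O(|E|)$. Obtaining the reverse topological order is itself linear and may already be available as part of the input ordering $T$. Hence the total running time is $O(|V|+|E|)$, i.e. linear, which together with the correctness argument completes the proof.
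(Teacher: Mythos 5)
Your proof is correct and follows essentially the same approach as the paper's: both rest on the observation that processing vertices in reverse topological order combined with the LIFO stack discipline forces each stack, read from top to bottom, to be in ascending topological order, and both charge one constant-time push per edge for the linear time bound. The only difference is stylistic --- you argue the pairwise ordering directly while the paper phrases it as a contradiction --- and your version is, if anything, more complete since you also explicitly verify that each stack receives exactly the immediate successors of its vertex.
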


\begin{proof}
Assume that there is a stack $(u_1, ..., u_n)$, $u_1$ is at the top of the stack. Assume that there is a pair $(u_j, u_k)$ in the stack, where $u_j$ has a bigger topological rank than $u_k$ and $u_j$ precedes $u_k$. This means that the for-loop examined $u_j$ before $u_k$.  Since the algorithm processes the vertices in reverse topological order, this is a contradiction. Vertex $u_j$ cannot precede $u_k$ if it were examined first by the for-loop. The algorithm traces all the incoming edges of every vertex. Therefore, it runs in linear time.
\end{proof}

\subsection{Creating the Indexing Scheme}
Now we present Algorithm \ref{alg:IndexingScheme} that constructs the indexing scheme. The first for-loop initializes the array of indexes. For every vertex, it initializes the cell that corresponds to its chain. The rest of the cells are initialized to infinity. The indexing scheme initialization is illustrated in Figure \ref{IndexingSchemeExample_init}. The dashes represent the infinite values. Notice that after the initialization, the indexes of all sink vertices have been calculated. Since a sink has no successors, the only vertex it can reach is itself.

The second for-loop builds the indexing scheme. It goes through vertices in descending topological order. For each vertex, it visits its immediate successors (outgoing edges) in ascending topological order and updates the indexes.
Suppose we have the edge $(v,s)$, and we have calculated the indexes of vertex $s$ ($s$ is an immediate successor of $v$). The process of updating the indexes of $v$ with its immediate successor, $s$, means that $s$ will pass all its information to vertex $v$. Hence, vertex $v$ will be aware that it can reach $s$ and all its successors. Assume the array of indexes of $v$ is $[a_1,a_2,...,a_{k_c}]$ and the array of $s$ is $[b_1,b_2,...,b_{k_c}]$. To update the indexes of $v$ using $s$, we merely trace the arrays and keep the smallest values. For every pair of indexes $(a_i, b_i)$, $0\leq i < k_c$, the new value of $a_i$ will be min\{$a_i$, $b_i$\}. This process needs $k_c$ steps.

\begin{lem}
\label{lemma:calc_indexes}
Given a vertex v and the calculated indexes of its successors, the while-loop of Algorithm \ref{alg:IndexingScheme} (lines 10-17) calculates the indexes of $v$ by updating its array with its non-transitive outgoing edges' successors.
\end{lem}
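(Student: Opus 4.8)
The plan is to prove the lemma in two halves: that the while-loop leaves in $v$'s array the correct reachability information, and that it does so while restricting the expensive array-merge to the non-transitive outgoing edges of $v$. Write $A_v$ for the $k_c$-entry array of $v$ and $A_v[j]$ for its $j$-th cell; by the intended semantics $A_v[j]$ is the smallest position in chain $j$ reachable from $v$, and $\infty$ if no vertex of chain $j$ is reachable. Correctness I would get by induction along the descending topological order in which the outer loop visits vertices: when $v$ is processed, every immediate successor $s$ has strictly larger topological rank, hence was processed earlier and already holds its final $A_s$. Since $v$ reaches a vertex $u$ exactly when $u=v$ or some immediate successor of $v$ reaches $u$, the correct array for $v$ is the reflexive self-entry fixed at initialization together with the pointwise minimum $\min_{(v,s)\in E} A_s$. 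The min-merge of lines 10--17 computes exactly this if every immediate successor contributes, so the whole content of the lemma reduces to showing that merging only the non-transitive edges yields the same minimum, and that this restriction is precisely what the loop performs.

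For the redundancy of transitive edges I would prove the auxiliary claim: if $(v,t)$ is transitive, then some immediate successor $s$ of $v$ with $(v,s)$ \emph{non-transitive} already reaches $t$. The smallest-rank immediate successor of $v$ is a valid base case, since a transitive edge to it would force an even smaller-rank immediate successor, contradicting minimality; a transitive $(v,t)$ gives an immediate successor $x\ne t$ with $x\rightsquigarrow t$ and $\mathrm{rank}(x)<\mathrm{rank}(t)$, and recursing on $x$ terminates because ranks strictly decrease while staying above $\mathrm{rank}(v)$. (When all the immediate successors in a single chain are grouped, by Lemma~\ref{lemma:outgoing_nontransitive_edges} only the lowest-positioned target is non-transitive and it reaches all the others, which is the transparent special case.) Once such an $s$ is found, $s\rightsquigarrow t$ implies everything reachable from $t$ is reachable from $s$, hence $A_s[j]\le A_t[j]$ for every $j$; therefore after $A_s$ has been merged into $A_v$, merging $A_t$ cannot lower any cell. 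Summing over all transitive edges gives $\min$ over the non-transitive successors $=\min$ over all successors.

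The step I expect to be the main obstacle is justifying that the while-loop restricts to non-transitive edges in $O(1)$ per edge, so the cost is $O(|E_{tr}|+k_c\,|E_{red}|)$ rather than $O(k_c\,|E|)$. Here I would invoke that the adjacency lists are sorted in ascending topological order (Lemma~\ref{lemma:SortingAdjList}), so within each chain the lowest target is seen first, and then prove the test ``$A_v[\mathrm{ch}(t)]\le\mathrm{pos}(t)$ already holds when the loop reaches $(v,t)$'' is equivalent to $(v,t)$ being transitive. If the cell is already small enough, $v$ reaches some $u$ of chain $\mathrm{ch}(t)$ with $\mathrm{pos}(u)\le\mathrm{pos}(t)$ through a previously processed successor $s\ne t$, yielding a $v$-to-$t$ path of length at least two; conversely, if $(v,t)$ is transitive, a witness immediate successor $x\ne t$ on a $v$-to-$t$ path has $\mathrm{rank}(x)<\mathrm{rank}(t)$, is processed earlier in the sorted order, and already forced $A_v[\mathrm{ch}(t)]\le A_x[\mathrm{ch}(t)]\le\mathrm{pos}(t)$. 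The delicate points are this converse, which depends on the witness being merged before $t$ is examined, and the boundary handling of the reflexive self-entry and of sink vertices; I would check these separately so that the $O(1)$ test never skips a genuinely non-transitive edge nor charges a full $O(k_c)$ merge to a transitive one.
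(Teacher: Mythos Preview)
Your argument and the paper's share the same core observation: for a transitive edge $(v,t)$ there is an immediate successor of $v$ of smaller topological rank that already reaches $t$, and once that successor has been handled, merging $A_t$ into $A_v$ cannot lower any cell. The paper's proof is much terser: it simply takes the first vertex $v_1$ on a witnessing $v$--$t$ path, notes that $v_1$ precedes $t$ in the sorted adjacency list, and asserts that $v$ ``will be firstly updated by $v_1$'', without arguing that $(v,v_1)$ itself passes the test rather than being skipped in turn. Your auxiliary claim---descending by rank until a \emph{non-transitive} witness $(v,s)$ is found---is exactly what closes this gap, and it is the one place where your argument is genuinely more careful than the paper's. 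On the other hand, your third paragraph does more than the lemma asks: the forward direction of the test (skip $\Rightarrow$ transitive) and the per-edge cost accounting are deferred in the paper to Theorem~\ref{thm:IndexingSchemeTime}, so here the paper only argues the ``transitive $\Rightarrow$ safely skipped'' direction. Your instinct to isolate the reflexive self-entry as a boundary case is also well placed; when $t$ lies in $v$'s own chain the test can reject $(v,t)$ even when the edge is non-transitive, so the strict equivalence you state in paragraph three is slightly too strong, though the paper's proof does not touch this case either.
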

\begin{proof}
\label{proof:calc_indexes}
Updating the indexes of vertex $v$ with all its immediate successors will make $v$ aware of all its descendants. The while-loop of Algorithm \ref{alg:IndexingScheme} does not perform the update function for every direct successor. It skips all the transitive edges. Assume there is such a descendant $t$ and the transitive edge $(v,t)$. Since the edge is transitive, we know by definition that there exists a path from $v$ to $t$ with a length of more than 1. Suppose that the path is $(v,v_1,..,t)$. Vertex $v_1$ is a predecessor of $t$ and immediate successor of $v$. Hence it has a lower topological rank than $t$. Since, while-loop examines the incident vertices in ascending topological order, then vertex $t$ will be visited after vertex $v_1$. The opposite leads to a contradiction.
Consequently, for every incident transitive edge of $v$, the loop firstly visits a vertex $v_1$ which is a predecessor of $t$. Thus vertex $v$ will be firstly updated by $v_1$ and it will record the edge $(v,t)$ as transitive. Hence there is no reason to update the indexes of vertex $v$ with those of vertex $t$ since the indexes of $t$ will be greater than or equal to those of $v$.
\end{proof}

\begin{algorithm}[H]
\caption{Indexing Scheme}
\label{alg:IndexingScheme}
\begin{algorithmic}[1]
\Procedure{Create Indexing Scheme}{$G,T,D$}\newline
 \textbf{INPUT:} A DAG $G=(V,E)$, a topological sorting T of G, and the decomposition D of G.
\For{\textbf{each vertex:} $v_i \in G$}
\State $v_i\mbox{.indexes  } \gets \mbox{new table[size of D]}$
\State $v_i\mbox{.indexes.fill(} \infty \mbox{)}$
\State $ch\_no \gets v_i\mbox{'s chain index} $
\State $pos \gets v_i\mbox{'s chain position} $
\State $v_i\mbox{.indexes[ } ch\_no \mbox{ ]} \gets pos$
\EndFor
\For{\textbf{each vertex $v_i$ in reverse topological order} }
\For{\textbf{each adjacent target vertex $t$ of $v_i$ in ascending topological order} }
\State $t\_ch \gets \mbox{chain index of $t$} $
\State $t\_pos \gets \mbox{chain position of $t$} $
\If{ $t\_pos<v_i$.indexes[$t\_ch$]}
\Comment{ $(v_i,t)$ is not transitive}
\State $v_i\mbox{.updateIndexes}(t.indexes)$
\EndIf
\EndFor
\EndFor
\EndProcedure
\end{algorithmic}
\end{algorithm}

\begin{figure}
\centering
\includegraphics[width=0.6\textwidth]{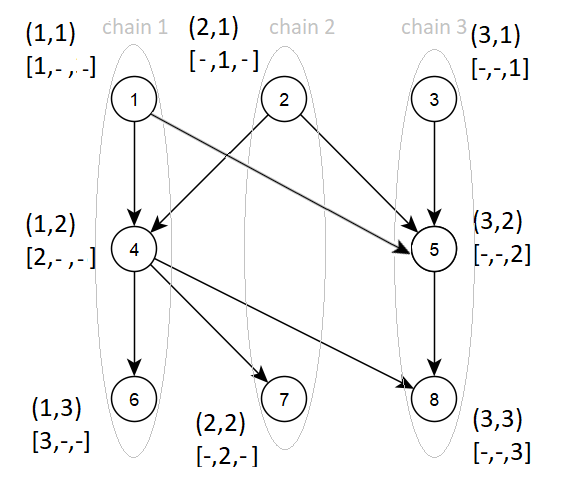}
\caption{Initialization of indexes.}
\label{IndexingSchemeExample_init}
\end{figure}

Combining the previous algorithms and results we have the following:

\begin{thm}
\label{thm:IndexingSchemeTime}
Let $G=(V,E)$  be a DAG. Algorithm \ref{alg:IndexingScheme} computes an indexing scheme for $G$ in $O(|E_{tr}|+k_c*|E_{red}|)$ time.
\end{thm}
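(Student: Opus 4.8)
The plan is to establish correctness and the time bound with a single induction over the order in which the main loop of Algorithm~\ref{alg:IndexingScheme} visits vertices, namely reverse topological order. I would maintain the invariant that, at the moment the loop finishes processing a vertex $v$, its array stores for every chain $j$ the lowest position in chain $j$ reachable from $v$ (and $\infty$ if $v$ reaches no vertex of chain $j$). Correctness of the query procedure then follows immediately from the reachability semantics of the scheme described earlier, so the whole theorem reduces to proving this invariant and counting the work.

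The correctness half is short because Lemma~\ref{lemma:calc_indexes} does the heavy lifting per vertex. The base case is supplied by the initialization loop: a sink gets $\infty$ in every cell except its own chain/position, which is exactly its reachable set. For the inductive step, when the loop reaches $v$ every successor of $v$ has strictly higher topological rank and is therefore already finalized; Lemma~\ref{lemma:calc_indexes} then guarantees that the inner loop, updating $v$ only along its non-transitive outgoing edges and skipping the transitive ones, still reconstructs $v$'s exact reachable set. The structural fact that makes the cell-wise minimum correct is chain monotonicity: within a chain the vertices form a reachability path in ascending position, so reaching the lowest position $p$ of a chain is equivalent to reaching every position $\ge p$; hence merging the already-correct arrays of the immediate successors by coordinatewise minimum yields exactly $v$'s array.

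For the running time I would charge the two loops separately. The initialization loop allocates and fills a $k_c$-entry array per vertex in $O(k_c\,|V|)$ time. In the main loop I charge each outgoing edge $(v,t)$ individually: reading the chain and position of $t$ and evaluating the test $t\_pos < v.\mathrm{indexes}[t\_ch]$ costs $O(1)$; if the test fails the edge is abandoned in $O(1)$, and if it succeeds the call \texttt{updateIndexes} takes a coordinatewise minimum of two length-$k_c$ arrays in $O(k_c)$. The decisive point is that this $O(1)$ predicate is precisely a transitivity test --- it succeeds exactly on the edges of $E_{red}$ and fails exactly on the edges of $E_{tr}$ --- so the transitive edges contribute only $O(|E_{tr}|)$ in total while the non-transitive edges contribute $O(k_c\,|E_{red}|)$. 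Together with the initialization this gives $O(k_c\,|V| + |E_{tr}| + k_c\,|E_{red}|)$; the $O(k_c\,|V|)$ term is dominated once $|E_{red}| = \Omega(|V|)$ (e.g., when the reduced DAG is weakly connected), and can otherwise be eliminated by initializing the arrays lazily with a version stamp, leaving the stated $O(|E_{tr}| + k_c\,|E_{red}|)$.

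The step I expect to be the real obstacle is justifying that the cheap $O(1)$ test faithfully decides transitivity at the instant it is evaluated, since this is exactly what keeps the transitive edges from costing $O(k_c)$ apiece and the bound from degrading to $O(k_c\,|E|)$. I would argue both directions. If the test fails then $v$ already reaches some position $\le t\_pos$ in $t$'s chain through an immediate successor processed earlier, which by chain monotonicity means $v$ reaches $t$ along a path of length $\ge 2$, so $(v,t)$ is transitive and safely skipped. Conversely, if $(v,t)$ is non-transitive then no earlier-processed successor can dominate $t$ in its chain, so the recorded minimum is still $> t\_pos$ and the test succeeds. The two ingredients that make this work are the sorted adjacency lists of Lemma~\ref{lemma:SortingAdjList}, which ensure that the intermediate successor $v_1$ on a witnessing path $v \to v_1 \to \cdots \to t$ (having smaller rank than $t$) is visited before $t$ and has already pushed $v.\mathrm{indexes}[t\_ch]$ down to at most $t\_pos$, and Lemma~\ref{lemma:outgoing_nontransitive_edges}, which caps the number of non-transitive outgoing edges at one per chain and thereby confirms the $k_c\,|E_{red}|$ accounting.
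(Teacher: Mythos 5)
Your proof takes the same route as the paper's: induction over the reverse topological order, with the per-vertex step being exactly the content of Lemma~\ref{lemma:calc_indexes}, plus the charging of $O(1)$ per skipped edge and $O(k_c)$ per update; you are in fact more careful than the paper about the $O(k_c|V|)$ initialization term. However, the claim you yourself single out as decisive --- that the constant-time test succeeds \emph{exactly} on $E_{red}$ and fails \emph{exactly} on $E_{tr}$ --- is false, and your justification of it has a concrete hole (the same hole that sits, less visibly, in the paper's proof of Lemma~\ref{lemma:calc_indexes}). Both arguments assume that when the test fails, the small value stored in $v.\mathrm{indexes}[\mathrm{chain}(t)]$ was contributed by an \emph{earlier-processed immediate successor} of $v$. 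It can instead come from $v$'s own initialization entry: if $t$ lies in $v$'s own chain, then $v.\mathrm{indexes}[\mathrm{chain}(t)]=\mathrm{pos}(v)<\mathrm{pos}(t)$ holds automatically, so the edge $(v,t)$ is skipped even when it is the chain link itself and hence \emph{not} transitive; the length-$\geq 2$ path your argument invokes need not exist, and $t$'s array is never merged into $v$'s.

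This is not a wording issue; it breaks the statement as written. Take vertices $a,b,x,e$, edges $(a,b),(b,e),(x,e)$, and the legitimate chain decomposition $C_1=[a,b]$, $C_2=[x,e]$. Algorithm~\ref{alg:IndexingScheme} skips the update along $(a,b)$ (since $\mathrm{pos}(b)=2$ is not less than $a.\mathrm{indexes}[1]=1$) and terminates with $a.\mathrm{indexes}=[1,\infty]$, so the query ``does $a$ reach $e$?'' is answered negatively although the path $a\to b\to e$ exists; your induction invariant fails at $a$. The converse direction of your equivalence fails as well: a transitive edge $(v,t)$ whose only witnessing intermediate lies in $v$'s own chain passes the test (that intermediate's array was never merged into $v$) and triggers a $\Theta(k_c)$ update, and instances built on this effect push the running time above $O(|E_{tr}|+k_c|E_{red}|)$. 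So both the correctness half and the accounting half of your argument collapse precisely in the own-chain case --- for your proof and the paper's alike. A correct proof must first repair the algorithm's treatment of own-chain targets (for instance, write $\mathrm{pos}(v)$ into $v$'s own-chain cell only \emph{after} the inner loop, taking a minimum, or unconditionally merge along the edge to $v$'s immediate chain successor); with such a fix your two-directional analysis of the test, restricted to targets outside $v$'s chain, does go through.
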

\begin{proof}
In the initialization step, the indexes of all sink vertices have been computed as we described above. 
Taking vertices in reverse topological order, the first vertex we meet is a sink vertex. When the for-loop of line $9$ visits the first non-sink vertex, the indexes of its successors are computed (all its successors are sink vertices).
According to Lemma~\ref{lemma:calc_indexes}, we can calculate its indexes, ignoring the transitive edges. Assume the for-loop has reached vertex $v_i$ in the $i$th iteration, and the indexes of its successors are calculated. 
Following Lemma~\ref{lemma:calc_indexes}, we can calculate its indexes. Hence, by induction, we can calculate the indexes of all vertices, ignoring all $|E_{tr}|$ transitive edges in $O(|E_{tr}|+k_c*|E_{red}|)$ time.
\end{proof}

Since $O(|E_{tr}|+k_c*|E_{red}|)$ = $O(|E_{tr}|+k_c*w*|V|)$ $\leq O(|E_{tr}|+k_c^2*|V|)$, we conclude that the transitive closure (indexing scheme) of $G$ can be computed in parameterized linear time given any chain decomposition.
\begin{corollary}
    \label{cor:IndexingSchemeTime_kc}
Let $G=(V,E)$ be a DAG. Algorithm \ref{alg:IndexingScheme} can be used to compute an indexing scheme (transitive closure) of $G$ in parameterized linear time given any chain decomposition with $k_c$ chains.
\end{corollary}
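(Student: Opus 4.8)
The plan is to derive the corollary directly from Theorem~\ref{thm:IndexingSchemeTime} by substituting the structural bound of Theorem~\ref{thm:EtrBound} and then interpreting the resulting expression in the language of parameterized complexity. First I would recall that Theorem~\ref{thm:IndexingSchemeTime} already establishes the running time of Algorithm~\ref{alg:IndexingScheme} as $O(|E_{tr}|+k_c*|E_{red}|)$ for an arbitrary chain decomposition with $k_c$ chains; nothing further about the algorithm's correctness or mechanics needs to be reproved here, since the corollary is purely a reformulation of that bound.

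Next I would bound each of the two summands separately. For the first summand, since $E_{tr}\subseteq E$ we have $|E_{tr}|\le|E|$, so this term is already linear in the input size and introduces no dependence on $k_c$ beyond that inherent in the edge set. For the second summand, the key substitution is Theorem~\ref{thm:EtrBound}, which gives $|E_{red}|\le width*|V|$; multiplying by $k_c$ yields $k_c*|E_{red}|\le k_c*width*|V|$. The one inequality I must state explicitly is that $width\le k_c$ for \emph{any} chain decomposition, because the width equals the size of a minimum chain decomposition and therefore lower-bounds the number of chains in the given one. Substituting this gives $k_c*|E_{red}|\le k_c^2*|V|$.

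Chaining these observations produces the running time $O(|E_{tr}|+k_c^2*|V|)=O(|E|+k_c^2*|V|)$. The final step is interpretive rather than computational: with $k_c$ regarded as a fixed parameter, the factor $k_c^2$ is a constant, so the bound is linear in the size of $G$, which is precisely the meaning of \emph{parameterized linear time}. I expect no genuine obstacle in the calculation itself, as the substantive content lives entirely in Theorems~\ref{thm:IndexingSchemeTime} and~\ref{thm:EtrBound}. The only point deserving care is making the parameterization explicit, namely that linearity holds in $|V|$ and $|E|$ once $k_c$ is treated as the parameter, rather than claiming an unconditional linear bound.
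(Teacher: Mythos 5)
Your proof is correct and takes essentially the same route as the paper: it invokes Theorem~\ref{thm:IndexingSchemeTime} for the $O(|E_{tr}|+k_c*|E_{red}|)$ bound, substitutes $|E_{red}|\leq width*|V|$ from Theorem~\ref{thm:EtrBound}, and uses $width \leq k_c$ to reach $O(|E_{tr}|+k_c^2*|V|)$, read as linear time once $k_c$ is the fixed parameter. Your explicit justification that $width \leq k_c$ holds for \emph{any} chain decomposition is a detail the paper leaves implicit, but otherwise the arguments coincide.
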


As described in the introduction, a parameterized linear-time algorithm for computing the minimum number of chains was recently presented in~\cite{caceres2022sparsifying}.
Its time complexity is $O(k^3|V | + |E|)$
where $k$ is the minimum number of chains, which is equal to the width of $G$.
If we use this chain decomposition as input to Algorithm~\ref{alg:IndexingScheme} we have the following:
\begin{corollary}
    \label{cor:IndexingSchemeTime_width}
Let $G=(V,E)$ be a DAG. Algorithm \ref{alg:IndexingScheme} can compute an indexing scheme (transitive closure) of $G$ in parameterized linear time in terms of width.
\end{corollary}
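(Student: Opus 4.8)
The plan is to compose two results that are already available: the parameterized‑linear‑time optimum chain decomposition of Cáceres et al.~\cite{caceres2022sparsifying} and the indexing‑scheme construction analysed in Theorem~\ref{thm:IndexingSchemeTime}. First I would run the algorithm of~\cite{caceres2022sparsifying} on $G$ to obtain a \emph{minimum} chain decomposition, i.e.\ one whose number of chains $k_c$ equals the width $w$ of $G$. By the stated bound this step costs $O(k^3|V| + |E|) = O(w^3|V| + |E|)$ time, which is already parameterized linear in the width. A single linear‑time traversal of the produced chains then assigns to every vertex its chain index and its position within that chain, yielding the decomposition $D$ in exactly the form required as input to Algorithm~\ref{alg:IndexingScheme}.

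Next I would feed $D$ (now with $k_c = w$) into Algorithm~\ref{alg:IndexingScheme}, after the linear‑time preprocessing that sorts all adjacency lists via Algorithm~\ref{alg:SortingAdjList} (justified by Lemma~\ref{lemma:SortingAdjList}). By Theorem~\ref{thm:IndexingSchemeTime} the scheme is built in $O(|E_{tr}| + k_c|E_{red}|)$ time. Substituting $k_c = w$ and invoking Theorem~\ref{thm:EtrBound}, which gives $|E_{red}| \le w|V|$, this becomes $O(|E_{tr}| + w^2|V|)$; and since $E_{tr}\subseteq E$ we have $|E_{tr}| \le |E|$, so the construction runs in $O(|E| + w^2|V|)$ time.

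Finally I would sum the two contributions. The total running time is
\[
O(w^3|V| + |E|) + O(|E| + w^2|V|) = O(w^3|V| + |E|),
\]
which is linear in the size of $G$ for every fixed width and polynomial in the parameter $w$; hence it is parameterized linear in terms of width, as claimed. This is precisely the bound of Corollary~\ref{cor:IndexingSchemeTime_kc} specialized to the optimum decomposition $k_c = w$.

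I do not expect a genuine obstacle, since the corollary is essentially a composition of previously established results. The only points requiring care are (i) confirming that the decomposition returned by~\cite{caceres2022sparsifying} can be converted in linear time into the chain/position labelling that Algorithm~\ref{alg:IndexingScheme} expects, and (ii) verifying that every preprocessing step (adjacency‑list sorting, position computation) stays within the $O(|E| + w^2|V|)$ budget, which it does because each is linear in $|V| + |E|$. Thus the main work is bookkeeping the two time bounds consistently rather than any substantive new argument.
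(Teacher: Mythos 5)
Your proposal is correct and follows exactly the paper's own route: the corollary is stated immediately after the paper invokes the $O(k^3|V|+|E|)$ minimum chain decomposition algorithm of~\cite{caceres2022sparsifying} and feeds it into Algorithm~\ref{alg:IndexingScheme}, whose $O(|E_{tr}|+k_c|E_{red}|)$ bound (Theorem~\ref{thm:IndexingSchemeTime}) combined with $|E_{red}|\leq w|V|$ (Theorem~\ref{thm:EtrBound}) gives the same $O(w^3|V|+|E|)$ total you derive. Your added bookkeeping about adjacency-list sorting and chain/position labelling is consistent with the paper's preprocessing steps and introduces no deviation.
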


\subsection{Experimental Results}
We conducted experiments using the same graphs of 5000 and 10000 nodes as we described in Section~\ref{sec:PathDec} that were produced by the four different models of Networkx~\cite{networkx} and the Path-Based model of~\cite{lionakis2021constant}. We computed a chain decomposition using the algorithm introduced in~\cite{kritikakis2022fast}, called NH\_conc, and created an indexing scheme using Algorithm \ref{alg:IndexingScheme}. For simplicity, we assume that the adjacency lists of the input graph are sorted, using Algorithm \ref{alg:SortingAdjList}, as a preprocessing step. 
We report our experimental results in Tables \ref{table:IndexingSchemeResults5000} and \ref{table:IndexingSchemeResults10000} for graphs with 5000 nodes and graphs with 10000 nodes, respectively.  

In theory, the phase of the indexing scheme creation needs $O(|E_{tr}|+k_c*|E_{red}|)$ time. However, the experimental results shown in the tables reveal some interesting (and expected) findings in practice:  As the average degree increases and the graph becomes denser, (a) the cardinality of $E_{red}$ remains almost stable; and (b) the number of chains decrease. The observation that the number of non-transitive edges, $E_{red}$, does not vary significantly as the average degree increases, implies that the number of transitive edges, $|E_{tr}|$, increases proportionally to the increase in the number of edges, since $(E_{tr}=E-E_{red})$. Since the algorithm merely traces in linear time the transitive edges, the growth of $|E_{tr}|$ affects the run time only linearly. As a result, the run time of our technique does not increase significantly as the the size (number of edges) of the input graph increases.  
In order to demonstrate this fact visually, we show the curves of the running time for the graphs of 10000 nodes produced by the ER model in Figure~\ref{fig:TimeErdosComp10000}. The $x$-axis shows the average degree of the nodes, i.e., the increasing density of the graphs.
The flat (blue line) represents the run time to compute the indexing scheme, and the curve (red line) represents the run time of the DFS-based algorithm for computing the transitive closure (TC). Clearly, the time of the DFS-based algorithm increases as the average degree (density) increases, while the time of the indexing scheme is a straight line almost parallel to the $x$-axis, i.e., remains almost constant with respect to the density. 
All models of Tables \ref{table:IndexingSchemeResults5000} and \ref{table:IndexingSchemeResults10000}  follow this pattern.

Apparently, there is a trade-off to consider when building an indexing scheme deploying our chain decomposition technique. The heuristic performs concatenations between paths. For every successful concatenation, the extra runtime overhead is $O(l)$, where $l$ is the longest path between the two concatenated paths. The unsuccessful concatenations do not cause any overhead.
Assume that we have a path decomposition, and then we perform chain concatenation.  
If there is no concatenation between two paths, the concatenation algorithm will run in linear time.

On the other hand, if there are concatenations, for each one of them, the cost is $O(l)$ time, but the savings in the indexing scheme creation is $\Theta(|V|)$ in space requirements and $\Theta (|E_{red}|)$ in time since every concatenation reduces the needed index size for every vertex by one. Hence, instead of computing a simple path decomposition (in linear time) the use of our chain concatenation procedure in order to create a more compact indexing scheme faster is always spreferred.
Another interesting and to some extent surprising observation that comes from the results of Tables \ref{table:IndexingSchemeResults5000} and \ref{table:IndexingSchemeResults10000} is that the transitive edges for almost all models of the graphs of 5000 and 10000 nodes with average degree above 20 are above 85\%, i.e.,  $|{E}_{{tr}}|/|{E}| \geq 85\%$, see the appropriate columns in both tables.  In some cases where the graphs are a bit denser, the percentage grows above 95\%.  This observation has important implications in designing practical algorithms for faster transitive closure computation in both the static and the dynamic case.

\begin{figure}[H]
\centering
\includegraphics[scale=0.8]{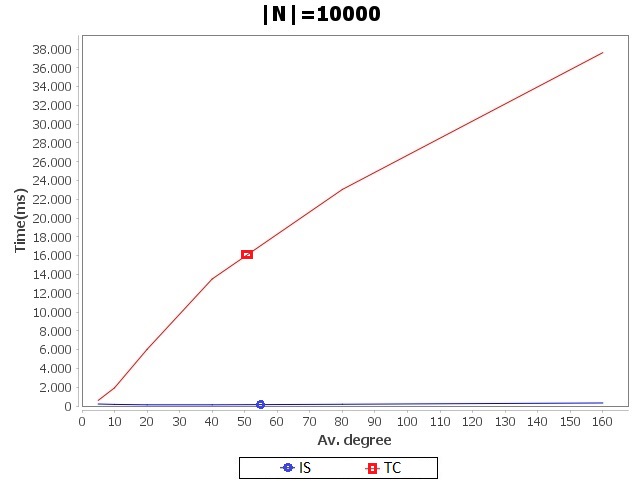}
\caption{Run time comparison between the Indexing Scheme (blue line) and TC (red line) for ER model on graphs of 10000 nodes, see Table \ref{table:IndexingSchemeResults10000}.}
\label{fig:TimeErdosComp10000}
\end{figure}

\newgeometry{top=1cm, bottom=2cm, left=2cm, right=2cm}
\begin{table}[H]
    \centering
    \begin{tabular}{|m{1.3cm}|| m{1.3cm}| m{1.3cm}| m{1cm}|m{1.5cm}| m{1.6cm}| m{1.4cm}|m{1cm}|m{1cm}|} 
\hline
\multicolumn{9}{|c|}{$|V|=5000$} \\
\hline
 Average Degree & Number of Chains & $|E_{tr}|$ & $|E_{red}|$ &$|E_{tr}|/|E|$ & $NH\_conc$ Time (ms) & Indexing Scheme Time (ms) & Total time (ms)  & TC\\ 
 \hline 
   & \multicolumn{8}{c|}{BA} \\
 \hline
 5 & 1630 & 8054 & 18921 & 0.32 & 3 & 101 & 104 & 137\\
 10 & 1055 & 28230 & 21670 & 0.57 & 12 & 79 & 91 & 333 \\
 20 & 664 & 75801 & 23799 & 0.76 & 6 & 54 & 60 & 638 \\
 40 & 335 & 180815 & 22504 & 0.89 & 10 & 48 & 58 & 1418\\ 
 80 & 207 & 382422 & 20854 & 0.95 & 122 & 118 & 240 & 3018\\
 160 & 163 & 770771 & 17660 & 0.98 & 25 & 107 & 132 & 5464\\
 \hline
   & \multicolumn{8}{c|}{ER} \\
 \hline
 5 & 923 & 3440 & 21466 & 0.14 & 6 & 67 & 73 & 172\\
 10 & 492 & 24761 & 25425 & 0.49 & 10 & 51 & 61 & 487 \\
 20 & 252 & 75312 & 24646 & 0.75 & 5 & 26 & 31 & 1079 \\
 40 & 139 & 175809 & 22634 & 0.89 & 46 & 51 & 97 & 2896\\ 
 80 & 70 & 378015 & 19435 & 0.95 & 16 & 50 & 66 & 5260\\
 160 & 38 & 769919 & 16843 & 0.98 & 98 & 138 & 236 & 8609\\
 \hline
    & \multicolumn{8}{c|}{WS, b=0.9} \\
 \hline
 5 & 687 & 7742 & 17258 & 0.30 & 13 & 71 & 84 & 393\\
 10 & 212 & 37992 & 12008 & 0.76 & 11 & 18 & 29 & 817 \\
 20 & 60 & 89272 & 10728 & 0.89 & 23 & 22 & 45 & 1530 \\
 40 & 25 & 186486 & 13514 & 0.93 & 47 & 45 & 92 & 3704\\ 
 80 & 20 & 386294 & 13706 & 0.97 & 115 & 103 & 218 & 6172\\
 160 & 17 & 787066 & 12934 & 0.98 & 253 & 207 & 460 & 9173\\
 \hline
    & \multicolumn{8}{c|}{WS, b=0.3} \\
 \hline
 5 & 9 & 18421 & 6579 & 0.74 & 11 & 8 & 19 & 910\\
 10 & 4 & 43505 & 6495 & 0.87 & 8 & 11 & 19 & 1107 \\
 20 & 4 & 93490 & 6510 & 0.93 & 18 & 18 & 36 & 2176 \\
 40 & 5 & 193416 & 6584 & 0.97 & 17 & 18 & 35 & 4753\\ 
 80 & 4 & 393348 & 6652 & 0.98 & 98 & 82 & 180 & 7949\\
 160 & 5 & 793430 & 6570 & 0.99 & 250 & 166 & 416 & 11757\\
 \hline
     & \multicolumn{8}{c|}{PB, Paths=70} \\
 \hline
 5 & 86 & 14155 & 10809 & 0.57 & 8 & 7 & 15 & 206\\
 10 & 101 & 36801 & 13102 & 0.74 & 7 & 12 & 19 & 313 \\
 20 & 107 & 84168 & 15419 & 0.85 & 7 & 15 & 22 & 890 \\
 40 & 93 & 181388 & 16988 & 0.91 & 49 & 216 & 265 & 2584\\ 
 80 & 73 & 376220 & 17303 & 0.96 & 128 & 163 & 291 & 4603\\
 160 & 51 & 758207 & 16566 & 0.98 & 55 & 141 & 196 & 9358\\
 \hline
 
\end{tabular}
    \caption{Experimental results for the indexing scheme for graphs of 5000 nodes.}
    \label{table:IndexingSchemeResults5000}
\end{table}

\begin{table}[H]
    \centering
    \begin{tabular}{|m{1.3cm}|| m{1.3cm}| m{1.3cm}| m{1cm}|m{1.5cm}| m{1.6cm}| m{1.4cm}|m{1cm}|m{1cm}|} 
\hline
\multicolumn{9}{|c|}{$|V|=10000$} \\
\hline
Average Degree & Number of Chains & $|E_{tr}|$ & $|E_{red}|$ &$|E_{tr}|/|E|$ & $NH\_conc$ Time (ms) & Indexing Scheme Time (ms) & Total time (ms) & TC\\ 
 \hline 
   & \multicolumn{8}{c|}{BA} \\
 \hline
 5 & 3341 & 14544 & 35431 & 0.29 & 7 & 278 & 285 & 441\\
 10 & 2159 & 53503 & 46397 & 0.54 & 14 & 231 & 245 & 1379 \\
 20 & 1264 & 147791 & 51809 & 0.74 & 15 & 218 & 233 & 3347 \\
 40 & 752 & 355854 & 52465 & 0.85 & 28 & 188 & 216 & 7700\\ 
 80 & 400 & 764926 & 48350 & 0.94 & 271 & 322 & 593 & 14632\\
 160 & 228 & 1560464 & 42967 & 0.97 & 81 & 264 & 345 & 24601\\
 \hline
   & \multicolumn{8}{c|}{ER} \\
 \hline
5 & 1837 & 5595 & 44401 & 0.11 & 12 & 200 & 212 & 600\\
 10 & 1003 & 44813 & 55366 & 0.45 & 9 & 161 & 170 & 1935 \\
 20 & 516 & 144276 & 55310 & 0.72 & 16 & 110 & 126 & 6031 \\
 40 & 271 & 347323 & 52620 & 0.87 & 25 & 101 & 126 & 13522\\ 
 80 & 139 & 749781 & 46666 & 0.94 & 40 & 145 & 185 & 23052\\
 160 & 72 & 1548153 & 39710 & 0.97 & 73 & 249 & 322 & 37613\\
 \hline
    & \multicolumn{8}{c|}{WS, b=0.9} \\
 \hline
 5 & 1332 & 13353 & 36647 & 0.27 & 12 & 175 & 187 & 1213\\
 10 & 447 & 74782 & 25218 & 0.75 & 9 & 53 & 62 & 3829 \\
 20 & 100 & 178930 & 21070 & 0.89 & 13 & 32 & 45 & 9279 \\
 40 & 29 & 373054 & 26946 & 0.93 & 24 & 60 & 84 & 13144\\ 
 80 & 24 & 771374 & 28626 & 0.96 & 266 & 247 & 513 & 25585\\
 160 & 22 & 1571957 & 28043 & 0.98 & 80 & 232 & 312 & 36507\\
 \hline
    & \multicolumn{8}{c|}{WS, b=0.3} \\
 \hline
 5 & 12 & 36816 & 13184 & 0.73 & 27 & 19 & 46 & 3468\\
 10 & 4 & 86804 & 13196 & 0.86 & 18 & 45 & 63 & 5063 \\
 20 & 4 & 186756 & 13244 & 0.93 & 10 & 42 & 52 & 12156 \\
 40 & 4 & 386751 & 13249 & 0.97 & 19 & 48 & 67 & 21055\\ 
 80 & 4 & 786840 & 13160 & 0.98 & 237 & 187 & 424 & 31016\\
 160 & 4 & 1586896 & 13104 & 0.99 & 62 & 167 & 229 & 40704\\
 \hline
     & \multicolumn{8}{c|}{PB, Paths=100} \\
 \hline
 5 & 125 & 8182 & 16810 & 0.33 & 12 & 16 & 28 & 240\\
 10 & 141 & 74182 & 25722 & 0.74 & 11 & 30 & 41 & 937 \\
 20 & 153 & 168839 & 30728 & 0.85 & 13 & 43 & 56 & 5015 \\
 40 & 142 & 363753 & 34606 & 0.91 & 27 & 78 & 105 & 13797\\ 
 80 & 120 & 756578 & 36918 & 0.96 & 56 & 142 & 198 & 27904\\
 160 & 89 & 1538101 & 36496 & 0.98 & 77 & 265 & 342 & 41235\\
 \hline
\end{tabular}
    \caption{Experimental results for the indexing scheme for graphs of 10000 nodes.}
    \label{table:IndexingSchemeResults10000}
\end{table}
\restoregeometry

\subsection{Speeding up the Method of Fulkerson}
\label{sec:FulkersonEvaluation}

In Section~\ref{sec:ExperimentsChains} we compute the minimum set of chains using the method of Fulkerson~\cite{Fulkerson}. As described, this method consists of three steps. The first step is the computation of the transitive closure, the second step involves the construction of a bipartite graph, and the third entails the computation of a maximal matching.
It has been a general belief for decades that the method of Fulkerson is not efficient (and thus not practically efficient) because of its dependency on the transitive closure problem. In this section, we show that the use of the indexing scheme (instead of the traditional transitive closure computation) in Fulkerson's method speeds up the computation of the width.

Specifically, the indexing scheme needs $O(k_c^2*|V| + |E_{tr}|)$ time, the construction of the bipartite graph has worst-case time complexity $O(|V|^2)$. The last step of maximal matching, requires $O(|E|*\sqrt{|V|})$ time, using the  algorithm by Hopcroft–Karp~\cite{DBLP:journals/siamcomp/HopcroftK73}. Hence, the implementation of the method requires $O(k_c^2*|V| + |V|^2 + |E|*\sqrt{|V|})$ time. 
In order to explore the practical significance of this we ran several experiments using  ER-model graphs.   Table~\ref{table:FulkersonEvaluation}, shows the runtime of each step of the Fulkerson method, and the final column shows the total runtime. Notice, that our transitive closure solution (indexing scheme), is not the most time-consuming step of this method, as believed in the past.  In fact it is the fastest step, and performs significantly better than the second step which is theoretically bounded by $O(|V|^2)$. 

This intriguing result can be explained by our earlier experimental results. As shown in Tables~\ref{table:IndexingSchemeResults5000} and~\ref{table:IndexingSchemeResults10000} the vast majority of edges are transitive. In fact, the number of non-transitive edges seem to be linear with respect to the number of nodes, (often only a few times the number of nodes). More precisely, our experiments show that the non-transitive edges are always less than six times the number of nodes. Hence, assuming that the number of non-transitive edges is a small constant times the number of nodes, then $|E_{red}|$ is $O(|V|)$.  Therefore, the expected runtime of the indexing scheme would be $O(k_c*|V| + |E_{tr}|)$.

Clearly, transitive closure solutions based on matrix multiplication cannot be faster than $O(|V|^2)$ since they perform computations on a two-dimensional adjacency matrix. Actually, it remains an open question whether it is possible to build a structure that requires $O(|V|^2)$ space and time while allowing us to answer queries in constant time.

The purpose of this sub-section is to show one application that directly benefits from the practical efficiency of our algorithms. By applying them to speed up Fulkerson's method, we challenge a prevailing belief about its effectiveness. It is not our intention to delve deeper into algorithms for minimum chain decomposition. Experimental results on optimal chain decomposition algorithms can be found in~\cite{rizzo2023chaining}.

\begin{table}[H]
    \centering
    \begin{tabular}{|m{1.4cm}|| m{1.6cm}| m{1.6cm}| m{2.5cm}|m{1cm}|} 
        \hline
        \multicolumn{5}{|c|}{Fulkerson Method} \\ 
        \hline
        Average Degree & Indexing Scheme Time (ms) & Bipartite Graph Time (ms) & Hopcroft–Karp Maximal Matching Time (ms) & Total Time (ms) \\ 
        \hline 
        & \multicolumn{4}{c|}{$|V|=5000$} \\ 
        \hline
        5 & 146 & 538 & 566 & 1250 \\
        10 & 98 & 1072 & 3228 &  4398 \\
        20 & 108 & 1223 & 5267 & 6598  \\
        40 & 42 & 916 & 8137 & 9095  \\
        80 & 165 & 1177 & 12265 & 13607 \\
        160 & 268 & 1129 & 10219 & 11616 \\
        \hline
       & \multicolumn{4}{c|}{$|V|=10000$} \\ 
        \hline
        5 & 466 & 1129 & 10219 & 11814 \\
        10 & 207 & 2794 & 10613 & 13614 \\
        20 & 241 & 3580 & 28918 & 32739 \\
        40 & 177 & 3925 & 48249 & 52351 \\
        80 & 321 & 5053 & 55315 & 60689 \\
        160 & 220 & 6107 & 69958 & 76285 \\
        \hline
    \end{tabular}
    \caption{Runtime evaluation of the Fulkerson Method utilizing the indexing scheme for ER graphs.}
    \label{table:FulkersonEvaluation}
\end{table}

\section{Conclusions and Extensions} \label{sec:conclusions}
In this paper, we extend ideas and experiments initiated in ~\cite{thesis, kritikakis2022fast, kritikakis2023fast}, providing a detailed and unified high-level overview. We describe the fastest way to compute a chain decomposition that is not merely a path decomposition and use it to implement transitive closure solutions. In particular, we focus on an indexing scheme that enables us to answer queries in constant time. We compute this scheme in parameterized linear time and space. Furthermore, we show how to reduce the graph in linear time given any path/chain decomposition removing transitive edges and how it can bolster transitive closure solutions. Our work diverges from mainstream approaches by exploring techniques that attempt to provide fast and practical solutions.
It is both theoretical and experimental, revealing crucial aspects of these problems. Our algorithms are applicable to real-world scenarios and can change the way we think about certain problems, as demonstrated by our experimental work on the method of Fulkerson, described in Section~\ref{sec:FulkersonEvaluation}.

Although our techniques were not developed for the dynamic case, where edges and nodes are added or deleted dynamically, the picture that emerges is very interesting. According to our experimental results, shown in Tables \ref{table:IndexingSchemeResults5000} and \ref{table:IndexingSchemeResults10000}, the overwhelming majority of edges in a DAG are transitive.  
The insertion or deletion of a transitive edge clearly requires a constant time update since it does not affect transitivity, and can be detected in constant time. On the other hand, the insertion or removal of a non-transitive edge may require a minor or major recomputation to reestablish a correct chain decomposition.  Similarly, since the nodes of the DAG are topologically ordered, the insertion of an edge that goes from a high node to a low node signifies that the SCCs of the graph have changed, perhaps locally.  However, even if the insertion/deletion of new nodes/edges causes significant changes in the reachability index (transitive closure) one can simply recompute a chain decomposition in linear or almost linear time, and then recompute the reachability scheme in parameterized linear time, $O(|E_{tr}|+k_c*|E_{red}|)$, and $O(k_c*|V|)$ space, which is still very efficient in practice.  For a very recent comparison of practical fully dynamic transitive closure techniques see~\cite{DBLP:journals/corr/abs-2002-00813}. 
We plan to work on the problems that arise in the computation of dynamic path/chain decomposition and reachability indexes in the future.


\appendix

 \bibliographystyle{elsarticle-num} 
 \bibliography{bibliography}




\end{document}